\documentclass[a4paper,USenglish]{lipics-v2021}
\nolinenumbers
\usepackage{Style-Files/preamble}

\algdef{SN}[EVENT]{Event}{EndEvent}[1]{\textbf{upon}\ #1\ \algorithmicdo}{}
\newcommand{\leaderless}{Fast\xspace}
\newcommand{\leaderlesscaps}{Fast\xspace}

\authorrunning{N. Giridharan, M. Nguyen, I. Abraham, A. Clement, N. Crooks, P. Sutra}

\title{Optimality and Trade-offs in \leaderlesscaps{} BFT SMR (Extended Version)}

\author{Neil Giridharan}{University of California, Berkeley, CA, USA}{giridhn@berkeley.edu}{}{}
\author{Minh Tung Nguyen}{Institut Polytechnique de Paris, Palaiseau, France}{minh-tung.nguyen@telecom-sudparis.eu}{}{}
\author{Ittai Abraham}{a16z Crypto Research, New York City, NY, USA}{ittaia@gmail.com}{}{}
\author{Allen Clement}{Subzero Labs, Zurich, Switzerland}{aclement@gmail.com}{}{}
\author{Natacha Crooks}{University of California, Berkeley, CA, USA}{ncrooks@berkeley.edu}{}{}
\author{Pierre Sutra}{Institut Polytechnique de Paris, Palaiseau, France}{pierre.sutra@telecom-sudparis.eu}{}{}

\EventEditors{Ioannis Chatzigiannakis, Andrea Vitaletti, Keren Censor-Hillel, and William K. Moses Jr.}
\EventNoEds{4}
\EventLongTitle{40th International Symposium on Distributed Computing (DISC 2026)}
\EventShortTitle{DISC 2026}
\EventAcronym{DISC}
\EventYear{2026}
\EventDate{November 9--13, 2026}
\EventLocation{Rome, Italy}
\EventLogo{}
\SeriesVolume{397}
\ArticleNo{6}
\Copyright{Neil Giridharan, Minh Tung Nguyen, Ittai Abraham, Allen Clement, Natacha Crooks, and Pierre Sutra}

\begin{document}
\maketitle
\ccsdesc{Theory of computation~Distributed algorithms}
\keywords{\leaderless state machine replication, consensus, partial synchrony, fault-tolerance}

\begin{abstract}
\leaderlesscaps{} state-machine replication (SMR) protocols in the crash-fault setting have attracted significant interest in both academia and industry. This interest stems from their advantages over leader-based protocols, including low execution latency for non-conflicting commands, high throughput, improved availability, and increased fairness. While \leaderless SMR is well studied in the crash-fault setting, the Byzantine fault-tolerant (BFT) setting remains much less understood, with only a small number of existing \leaderless BFT protocols. In this paper, we present tight upper and lower bounds on the replication factor required for \leaderless BFT SMR. We also present a suboptimal protocol that illustrates a trade-off between replication factor and recovery efficiency.
\end{abstract}

\section{Introduction}
\label{sec:introduction}


State machine replication (SMR) is a foundational technique for building fault-tolerant-services, ranging from replicated databases, coordination services to blockchains. SMR guarantees that all replicas execute the same sequence of commands such that they behave as a single highly available state machine.

Most widely deployed SMR protocols, including Raft~\cite{raft} and Multi-Paxos~\cite{multi-paxos}, are \emph{leader-based}. There is a distinguished replica that is responsible for assigning commands to positions in a global log. This design is simple and effective, but adds additional latency in two ways. First, every command must be routed through, and ordered by the leader. Second, in protocols such as Multi-Paxos, commands are assigned to predefined log slots. As a result, a command may be forced to wait for earlier slots to commit before it can execute, even when those earlier commands do not conflict with it.

\leaderlesscaps{} SMR protocols~\cite{epaxos, leaderless-smr, atlas, bipartisan-paxos} are designed to overcome the performance limitations inherent in traditional leader-based protocols. These protocols move away from a model where a single leader must totally order every command and instead exploit the observation that many commands commute. To ensure linearizability~\cite{linearizability}, the gold standard for SMR~\cite{smr}, replicas need only agree on a consistent order for conflicting commands; non-conflicting commands can safely be executed in different orders~\cite{generalized-paxos}.  Rather than forcing every command into a fixed slot within a totally ordered log, it is sufficient to reach agreement on the specific dependencies (ordering constraints) of each command. 

This shift in perspective allows \leaderless SMR protocols to execute non-conflicting commands with low latency. Specifically, the defining property of \leaderless SMR protocols is that they can execute such commands in at most $2\Delta$ time (two message delays) when the network is synchronous.
\leaderlesscaps{} SMR protocols also provide three additional practical benefits.
1) High Throughput: By avoiding the single-leader bottleneck, the system spreads load across all replicas, maximizing aggregate bandwidth.
2) Improved Availability: Unlike leader-based systems that stall during a leader election, leaderless protocols continue to commit commands as long as the number of faults remains below the threshold $f$.
3) Increased Fairness: Because ordering decisions are decentralized, no single replica can order commands arbitrarily. This inherently mitigates the risks of biased ordering often seen in leader-based protocols.

\leaderlesscaps{} SMR protocols have gathered significant interest in both industry and academic settings. Protocols such as EPaxos~\cite{epaxos}, Atlas~\cite{atlas}, Caesar~\cite{caesar}, Accord~\cite{accord}, and BPaxos~\cite{bpaxos} confirm the substantial benefits that the approach enables, especially for workloads with low conflict rates. In fact, Accord is now being deployed in Apple's Cassandra~\cite{cassandra} system. Beyond these practical benefits, the crash-fault setting is also relatively well understood: prior academic work has formalized results for latency and resilience of these systems~\cite{leaderless-smr,two-step,epaxos-star}.


While crash-fault leaderless protocols have led to a substantial body of work and practical deployments, we lack a comparable understanding of what is achievable when replicas may behave Byzantine. In particular, it is not clear which benefits of \leaderless SMR can be preserved, what is the necessary replication factor, or what new mechanisms are required to maintain safety. The difficulty of the Byzantine setting is illustrated by prior attempts to adapt \leaderless SMR techniques to Byzantine faults, which were later found to have correctness issues~\cite{ezbft,ezbft-revisited}. This gap motivates our study of \leaderless BFT SMR.

This paper initiates a systematic study of \leaderless BFT SMR protocols and identifies a fundamental trade-off between optimal resilience and recovery efficiency. We highlight three key results.

\begin{itemize}
    \item Our main result is an upper bound showing that \leaderless BFT SMR is achievable with the optimal replication factor of $n \geq 5f+1$ replicas (Algorithm~\ref{alg:replica-part-1}). Prior \leaderless BFT SMR protocols with a replication factor of $n\geq 5f+1$ either required a fault-free fast path~\cite{leaderless-smr,basil} or tightly synchronized clocks~\cite{flutter,aspen}. In contrast, our protocol requires neither: it tolerates up to $f$ failures while preserving fast path execution and does not require tightly synchronized clocks. The key technical ingredient is a new recovery procedure that safely recovers fast-path decisions while ensuring that conflicting commands are always executed in a consistent order. This allows the protocol to simultaneously achieve an optimal replication factor and fast execution latency for non-conflicting commands.
      
    \item We then prove a matching lower bound, showing that this replication factor is tight (Theorem~\ref{thm:leaderless-lower-bound}). Specifically, we show that no \leaderless BFT SMR protocol exists for $n \leq 5f$. The lower bound follows from the requirement that \leaderless protocols execute non-conflicting commands in two message delays (including client communication).

    \item Finally, we present a simpler but suboptimal \leaderless BFT SMR protocol for $n \geq 7f+1$ replicas (Algorithm~\ref{alg:replica-sevenf-plusone}). This protocol illustrates the trade-off between replication factor and recovery efficiency: with larger quorums, recovery no longer requires our new mechanism and can instead rely on a much simpler procedure that is more efficient.
\end{itemize}


\section{Model and Definitions}
\label{sec:model}

\subsection{System Model}
We consider a system of $n$ replicas $\Pi=\{1,\ldots,n\}$. Replicas communicate
over a reliable, authenticated, point-to-point network. A replica is
\emph{correct} if it follows the protocol and is \emph{Byzantine} otherwise.
Byzantine replicas may behave arbitrarily: they may equivocate, omit messages,
send malformed messages, or coordinate their behavior with other Byzantine
replicas. As is standard, we assume that at most $f$ replicas are Byzantine.

We consider a static adversary: the adversary chooses the faulty replicas before the execution begins. The adversary may coordinate Byzantine replicas and schedule message deliveries
subject to the network assumptions below, but cannot break standard
cryptographic primitives.

We adopt the partially synchronous model. There is a known message-delay bound
$\Delta$ and an unknown Global Stabilization Time, denoted GST, such that after
GST every message sent between correct replicas is delivered within $\Delta$
time. Before GST, messages may be delayed arbitrarily, but the network does not corrupt messages. 

We assume standard digital signatures and a public-key infrastructure. We write
$\langle m\rangle_k$ to denote a message $m$ signed by replica $k$. A message is
\emph{well-formed} if all signatures it contains are valid and the message
satisfies the syntactic requirements of the protocol. Correct replicas ignore
malformed messages.

\subsection{Definitions}

\par{\textbf{BFT SMR.}} In Byzantine fault-tolerant state-machine replication, clients \textbf{submit} commands to be executed by replicas, and replicas \textbf{execute} commands by applying them to the state machine~\cite{smr}. Without loss of generality, we assume that commands are unique and have unique, collision-resistant identifiers.

More precisely, there is a public verification predicate $\mathsf{VerifyId}(id,x)$ that determines whether $id$ is a valid identifier for command payload $x$. We also assume that identifiers are unforgeable, so only the client that submitted command payload $x$ can produce an identifier $id$ such that $\mathsf{VerifyId}(id,x)$ holds. We also assume collision resistance: for any command payload $x$ and identifier $id$ such that $\mathsf{VerifyId}(id,x)$ holds, no adversary can find a distinct command payload $y\neq x$ such that $\mathsf{VerifyId}(id,y)$ also holds. A standard way of generating identifiers that satisfies this definition is to use a signed digest of the command payload.

A command is valid if it satisfies the public verification predicate and has a valid client signature. A command $x$ conflicts with command $y$ (we write $x\bowtie y$) if the relative order of executing these two commands affects the final system state. Two commands that do not conflict commute ($x\not\bowtie y$). The $\mathsf{noop}$ command is a special command that does not conflict with any other command. Database systems, for instance, say that transactions conflict if both of them access the same key and one of the accesses is a write. As observed in prior work~\cite{generalized-paxos, byzantine-generalized-paxos, leaderless-smr}, BFT SMR only needs to order conflicting commands to satisfy linearizability. We provide the formal definition of BFT SMR below.


\begin{definition}[BFT SMR]
\label{def:generalized-leaderless-smr}
A protocol implements Byzantine fault-tolerant state machine replication if it satisfies the following properties when there are at most $f$ failures.
\begin{description}

    \item[Safety.]
    Correct replicas execute conflicting commands in the same order.

    \item[Liveness.]
    In an execution with a finite number of submitted commands, if a correct replica executes command $c$ or a correct client submits command $c$, then all correct replicas eventually execute $c$. 



    \item[External validity.]
    Every command executed by a correct replica was submitted by some client.
    
\end{description}
\end{definition}

Note that we have a stronger assumption on liveness that guarantees termination only when there are a finite number of commands. This property matches the liveness properties that \leaderless protocols like EPaxos~\cite{epaxos, epaxos-star} provide.

\par{\textbf{\leaderlesscaps{} BFT SMR.}} The defining characteristic of \leaderless BFT SMR protocols is that they can execute commands in the minimal amount of latency (within $2\Delta$) in \textit{synchronous conflict-free} execution. Informally, a synchronous conflict-free execution is one in which it is after GST and there are no concurrently submitted conflicting commands. This stringent requirement is not easily met. Leader-based protocols~\cite{pbft,hotstuff,autobahn} cannot satisfy it as routing client requests to the leader necessarily adds an additional message delay. Similarly, in protocols with a predefined slot order such as Mir-BFT~\cite{mir-bft} a command in slot $i$ may have to wait for all lower-numbered slots to commit before it can execute, even when those slots committed non-conflicting commands. We formally define this performance constraint during conflict-free executions, adapting the definitions from~\cite{epaxos-star, two-step} to the Byzantine setting.


\begin{definition}[Conflict-freedom]
\label{def:conflict-free}
A command $c$ \emph{precedes} a command $c'$ in an execution if, at the time
$c'$ is submitted by a client, all correct replicas have already executed $c$.
Commands $c$ and $c'$ are \emph{concurrent} if neither precedes the other. A
command $c$ is \emph{conflict-free} if it does not conflict with any concurrent
command.
\end{definition}

\begin{definition}[Optimal execution latency.]
After GST, if a correct client submits a valid conflict-free command $x$ at time $t$, then every correct replica executes $x$ by time $t+2\Delta$.
\end{definition}



\begin{definition}[\leaderlesscaps{} BFT SMR]
    A \leaderless BFT SMR protocol satisfies all properties of BFT SMR and optimal execution latency.
\end{definition}

\section{\leaderlesscaps{} SMR Background}

In this section, we give an overview of how the crash-fault tolerant (CFT) family of \leaderless SMR protocols works. We use a similar framework to construct our upper bound protocol for the BFT setting.

Because non-conflicting commands need not be ordered, \leaderless SMR protocols can run many single-shot agreement instances in parallel. Rather forming a totally ordered log of instances, the \leaderless SMR protocol constructs a partial order over instances (represented by command identifiers) such that any two conflicting commands need to be consistently ordered across replicas. This is achieved by having each instance agree on a tuple consisting of two components: 1) the command (payload) itself and 2) the dependencies of the command, which is the set of conflicting commands (specifically command identifiers) that this operation needs to be ordered after.

The collection of committed commands defines a dependency graph. Each committed command is a vertex, and there is an edge from each command to every command in its committed dependency set. Since replicas run an agreement instance for every command, all correct replicas agree on the dependency graph. Execution is then deterministic. A replica waits until all transitive dependencies of a command have committed, computes the strongly connected components of the resulting
dependency graph, orders those components in topological order, and executes commands inside each component by increasing command identifiers. Because the linearization algorithm is deterministic, all correct replicas execute the same
commands in the same order. The execution algorithm pseudocode is presented in Algorithm~\ref{alg:execute-committed}.

To maintain safety, \leaderless SMR protocols need to satisfy two key invariants~\cite{leaderless-smr, epaxos-star}, which ensure that conflicting commands are always executed in the same order. The first invariant ensures that replicas agree on the same dependency graph. The second variant ensures that committed conflicting commands always have an edge in the dependency graph.

\begin{theorem}[Invariant 1: Agreement]
\label{thm:agreement-invariant}
For any command identifier $id$, if both $(c,D)$ and $(c',D')$ are committed for $id$, then $c=c'$ and $D=D'$.
\end{theorem}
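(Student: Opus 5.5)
The plan is to argue agreement per instance $id$ in the standard ballot-based style used for single-shot consensus, adapted to the two ways a tuple can become committed in our protocol: the fast path (a client or replica gathers matching \textsc{PreAccept}-style votes from a fast quorum of size $n-f$ for the same $(c,D)$), and the slow path or recovery path (a tuple is committed in some ballot $b$ after a quorum accepts it). I will first dispose of the payload component. Identifiers are unforgeable and collision-resistant, and correct replicas only vote for $(c,D)$ at $id$ when $\mathsf{VerifyId}(id,c)$ holds or $c=\mathsf{noop}$. So any two committed non-$\mathsf{noop}$ payloads for $id$ coincide. The real work is therefore to show that a $\mathsf{noop}$ and a real command cannot both commit, and that the dependency sets coincide. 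Both of these reduce to showing that the whole tuple is unique.

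The main lemma will be stated by induction on ballots. If $(c,D)$ is committed for $id$ at ballot $b$ (with the fast path treated as ballot $0$), then every correct replica that accepts a tuple for $id$ at any ballot $b'>b$ accepts exactly $(c,D)$. Any commit certificate at ballot $b'$ must contain at least $f+1$ correct acceptors, so this lemma immediately implies $(c',D')=(c,D)$.

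\emph{Slow-path base case.} Here the argument is routine quorum intersection. Take an accept quorum at ballot $b$ and a recovery quorum at ballot $b'$, each of size at least $n-f$ with $n\ge 5f+1$. They intersect in at least $n-2f\ge 3f+1$ replicas, so at least $2f+1$ of them are correct. The recovery rule picks the highest-ballot accepted value reported, so by the inductive hypothesis on the intermediate ballots it selects $(c,D)$.

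\emph{Fast-path base case.} This is the main obstacle. A fast commit at ballot $0$ means $n-f$ replicas voted $(c,D)$, of which up to $f$ may be Byzantine. A recovery quorum of $n-f$ replicas therefore sees at least $n-2f$ votes from the fast quorum, but only $n-3f\ge 2f+1$ of those are guaranteed to be correct votes for $(c,D)$. Byzantine replicas can report a competing tuple $(c,D'')$, which may differ only in its dependencies, and the recovering replica must decide without knowing which tuple was actually committed. I would handle this as follows:
\begin{itemize}
\item Show that a value reported by at least $2f+1$ replicas in the recovery quorum is the unique candidate that could have fast-committed. Two candidates that could each have fast-committed would together require more than $n-f$ replicas in the recovery quorum, since correct replicas vote once per instance at ballot $0$ and $2(n-3f)+f>n-f$ when $n\ge 5f+1$.
\item Invoke the paper's new recovery mechanism for the ambiguous case, where no value reaches the threshold or dependency information from conflicting instances must be consulted. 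The idea is to show that if $(c,D)$ was fast-committed, then some correct replica's evidence forces recovery to choose it. I expect this to be where the exact thresholds $5f+1$ and the specific recovery rule (likely involving checking conflicting commands' states) are needed.
\end{itemize}

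Finally, I would also cover the case of two fast commits at ballot $0$ directly by intersection. Two fast quorums of size $n-f$ intersect in $n-2f\ge 3f+1$ replicas, so they share at least $2f+1$ correct replicas. Since correct replicas vote for only one tuple per instance at ballot $0$, the two tuples are equal. Combining the base cases with the induction over ballots gives the theorem.
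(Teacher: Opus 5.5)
There is a genuine gap, and it sits in the fast-path case. That is the only place where Agreement is actually at risk in this protocol, and you leave it as a placeholder (``invoke the paper's new recovery mechanism''). You also misplace the difficulty. If $(x,deps)$ fast-committed at $id$, every status quorum contains at least $n-3f$ correct \textsc{FastVote}s for it. No other value can reach $n-3f$ among the remaining $2f$ reports, so identifying the candidate is never ambiguous. The real danger is that the \textsc{Validate} branch can still end with the coordinator proposing $(\mathsf{noop},\emptyset)$. It does so whenever it obtains a commit certificate for some lower-ranked $id'$ with value $(y,D')$ such that $y\bowtie x$, $id\notin D'$ and $id'\notin deps$. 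If that were possible after a fast commit of $(x,deps)$, both $x$ and $\mathsf{noop}$ would commit for $id$. Nothing internal to instance $id$ rules this out, so your per-instance ballot induction cannot close on its own. Separately, the paper's status rule is not ``highest-ballot value'' but ``$n-3f$ matching recovery votes'', and your slow-path case should use that rule. The $\ge n-3f$ correct voters of the committing view reappear in every later status quorum and, by induction, still report that value. The other $\le 2f$ reports cannot reach the threshold.

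The missing ingredient is an inter-instance lemma: if $(x,deps)$ has a fast commit certificate at $id$, then no commit certificate exists at any $id'$ for a $(y,D')$ with $y\bowtie x$, $id\notin D'$ and $id'\notin deps$. A $\mathsf{noop}$ proposal must carry such a certificate, and correct replicas check the evidence before voting, so this lemma forces every view of $id$ to propose $(x,deps)$. The proof uses one local fact: a correct replica that fast-votes for both instances puts the earlier one into the dependencies of the later, because the \textsc{Req} enters $known$ before dependencies are computed. Then split on where $(y,D')$ came from, after unwinding re-proposals of $n-3f$ prior votes by induction on the views of $id'$:
\begin{itemize}
\item If $(y,D')$ fast-committed, the two $n-f$ fast quorums share at least $n-3f\ge 1$ correct replica.
\item If it came from the union rule, the $n-f$ signers of $fv[id']$ share at least $n-3f$ correct replicas with the fast quorum of $id$. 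Each of them must have voted for $id'$ after $id$ (otherwise $id'\in deps$), so $id\in D'$.
\item If it was a validated fast-vote value, its status certificate has $n-3f$ \textsc{FastVote}s for $(y,D')$. These overlap the $n-f$ fast voters of $(x,deps)$ in at least $n-4f$ replicas, hence in at least $n-5f\ge 1$ correct one.
\end{itemize}
Every case contradicts $id\notin D'\wedge id'\notin deps$, and the last case is exactly where $n\ge 5f+1$ is used. You should also justify treating the fast path as view $0$. Fast commits can happen after replicas enter recovery, but a correct replica never sends a \textsc{FastVote} once it enters view $1$, so the fast vote it reports in \textsc{Status} is final.
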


\begin{theorem}[Invariant 2: Visibility]
\label{thm:visibility-invariant}
    If $(c,D)$ is committed and $(c',D')$ is committed for command identifiers $id$ and $id'$ respectively, where $id\neq id'$ and $c\bowtie c'$, then $id\in D'$ or $id'\in D$.
\end{theorem}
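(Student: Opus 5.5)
The Visibility invariant (Theorem~\ref{thm:visibility-invariant}) is stated here in the background section as a property that a \leaderless protocol must guarantee, not yet for a specific algorithm. The plan is to prove it for the concrete protocol (Algorithm~\ref{alg:replica-part-1}) by a quorum-intersection argument. The central claim is that every committed tuple $(c,D)$ for $id$ is \emph{supported} by a set of replicas. Concretely, it is backed by a quorum $Q$ of replicas that each reported the conflicting identifiers they had seen when they first received $id$. The committed $D$ must contain the union of the dependencies reported by at least the correct members of some sufficiently large subset of $Q$.

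First, fix committed $(c,D)$ for $id$ and $(c',D')$ for $id'$ with $c\bowtie c'$. For each commit path (fast path with $n-f$ matching votes, slow path, or recovery), establish a support lemma. The lemma says that there is a set $S$ of at least $n-2f$ replicas, and hence at least $n-3f$ correct ones, such that every correct $r\in S$ had processed the proposal of $id$ before computing the dependencies that end up in $D$. Moreover, any conflicting identifier $r$ processed earlier appears in $D$. Let $S'$ be the analogous set for $id'$.

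Second, intersect. With $n\ge 5f+1$, two sets of at least $n-2f$ replicas share at least $n-4f\ge f+1$ replicas, so they share a correct replica $r$. Since $r$ is correct, it receives $id$ and $id'$ in some order, say $id$ first. When it later processes $id'$, it reports $id$ as a dependency. By the support lemma for $id'$, this gives $id\in D'$. The symmetric case gives $id'\in D$.

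The main obstacle is the support lemma on the recovery path. A recovering replica may see only $n-f$ responses, of which $f$ are Byzantine. It must reconstruct a fast-path decision without knowing which reports were honest. It must also ensure that any dependency that some correct member of $S$ contributed is not dropped, and that if the command was not fast-committed, the chosen $D$ still includes every identifier reported by a threshold of replicas. I would first prove that the recovery rule includes any identifier reported by at least $n-3f$ of the collected responses, or at least $f+1$ of them. Then I would show that a correct replica in the intersection above forces such a threshold, using Agreement (Theorem~\ref{thm:agreement-invariant}) so that recovery and fast commit yield the same $D$. If that threshold count fails at $5f+1$, I would fall back on the paper's new mechanism that recovers fast-path decisions and argue through its specific rule.
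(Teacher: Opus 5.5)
There is a genuine gap. Your argument rests on a uniform support lemma: every committed $(c,D)$ is backed by at least $n-2f$ replicas whose observed conflicts all appear in $D$. That lemma is false for this protocol, and it fails exactly in the case that makes the $n\ge 5f+1$ protocol nontrivial. Suppose the status phase finds $n-3f$ matching \textsc{FastVote} reports for $(x,deps)$ and validation succeeds. The coordinator then proposes $deps$ verbatim. The only replicas whose observations are guaranteed to be reflected in $D$ are those $n-3f$ reporters, and only $n-4f$ of them are guaranteed correct. Two such sets are only guaranteed to share $n-6f$ replicas, which is negative at $n=5f+1$, so they can be disjoint. The paper's ``main challenge'' example is precisely this situation: disjoint $A$, $B$ of size $n-3f$ fast-voting $(x,\emptyset)$ and $(y,\emptyset)$. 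There, no intersection argument can produce $id\in D'$ or $id'\in D$.

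Your fallback rule (include identifiers reported by $n-3f$ or $f+1$ responses) is not what the protocol does. Saying you would ``argue through its specific rule'' leaves this essential case unproved. Your plan also never uses $\mathsf{noop}$, and that is how visibility is actually rescued: a committed $\mathsf{noop}$ conflicts with nothing, so the hypothesis $c\bowtie c'$ fails.

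Your intersection idea does work whenever at least one of the two values was fast-committed from $n-f$ votes or proposed via the union branch. Those $n-f$ fast votes meet the other side's $n-3f$ (or more) fast voters in at least $n-4f\ge f+1$ replicas, hence in a correct $r$. Whichever instance $r$ fast-voted second carries the first in its dependencies; in the union case that dependency enters $D$ through the union.

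The remaining case, where both values were adopted from $n-3f$ fast votes, needs the rank-ordered validation argument. Let $id'<id$. The $n-f$ \textsc{ValidateOk} senders for $id$ meet the $n-3f$ fast voters of $(c',D')$ in a correct $r$. There are two sub-cases:
\begin{itemize}
\item If $r$ fast-voted for $id'$ after answering \textsc{Validate}, then $id$ was already in its $known$ set, because the handler adds it before replying. Hence $id\in D'$.
\item Otherwise $r$ placed $id'$ in its $wait$ set, unless $id\in D'$ or $id'\in deps$ already holds. The coordinator then had to exhibit a commit certificate for $id'$. By Agreement that certificate carries $(c',D')$, and the conflict check forces a $\mathsf{noop}$ proposal instead of $(x,deps)$.
\end{itemize}
You also need the standard reduction of values re-proposed from $n-3f$ recovery votes to the view where they were first proposed. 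With that, this case analysis is what the proof needs; a single quorum-intersection lemma cannot replace it.
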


\leaderlesscaps{} SMR protocols consistent of three components: a fast path that is guaranteed to succeed for conflict-free commands after GST. A recovery path, which guarantees that if the fast path fails, some value is eventually committed. Finally, an execution engine that executes commands

To achieve optimal execution latency, the agreement protocol includes a fast path that succeeds when a quorum of replicas have the same input value. In synchronous executions, conflict-free commands are guaranteed to take this fast path. Moreover, because all replicas have already executed the command’s committed dependencies, the command can be executed immediately once it commits.

If the fast path does not succeed, either due to asynchrony or concurrent conflicting commands, the recovery path guarantees that some value is eventually committed. Recovery is view-based: each view has a coordinator responsible for driving agreement. To preserve safety, the coordinator must propose a value that is consistent with any value that may have committed on the fast path or in an earlier view, while also ensuring that conflicting commands cannot miss each other.

\begin{algorithm}[H]
\caption{Execution algorithm}
\label{alg:execute-committed}
\begin{algorithmic}[1]
\State $\mathit{executed}\gets \emptyset$
\While{\textbf{true}}
    \State Let $G \subseteq \mathit{dep}$ be the largest subgraph such that
    $\forall id \in G,$ $id$ is committed and all of its dependencies are in $G$
    \For{$C \in \mathsf{SCC}(G)$ in topological order}
        \For{$id \in C$ in increasing rank order of identifiers}
            \If{$id \notin \mathit{executed}$}
                \State \textbf{execute} command committed for $id$
            \EndIf
            \State $\mathit{executed} \gets \mathit{executed} \cup \{id\}$
        \EndFor
    \EndFor
\EndWhile
\end{algorithmic}
\end{algorithm}

\section{$n\geq 5f+1$ Protocol Overview}
\label{sec:overview}


We first show that there exists a protocol that solves \leaderless BFT SMR for $n\geq 5f+1$ replicas. The protocol consists of a series of instances, each identified by a request identifier $id$. A client request \msg{Req}{id,x} associates instance $id$ with command payload $x$. The purpose of instance $id$ is to commit a value $(x',D)$, where $x'$ is either a command payload or $\mathsf{noop}$ and $D$ is a set of dependencies that represent ordering constraints. Once a replica has committed all of the transitive dependencies for the instance, it can run the execution algorithm (Algorithm~\ref{alg:execute-committed}) to execute the command payload.

Each instance consists of two commit paths: \one a fast path that commits in two message delays after GST when there are no concurrent conflicts and \two a recovery path that makes progress when conflicts do occur.



\subsection{Fast Path}

To satisfy optimal execution latency, the protocol must commit conflict-free commands in two message delays after GST. Since the client already uses one message delay to send the command to replicas, the protocol has only one additional message delay before it must commit. The fast path therefore consists of replicas broadcasting a vote for their initial value. Since the fast path must tolerate up to $f$ faulty replicas, a replica must commit after receiving $n-f$ matching votes for the same value. If replicas observe non-matching votes, they enter the recovery path.

\subsection{Recovery Path}

To guarantee liveness, the recovery path must also make progress using responses from only $n-f$ replicas. At the same time, safety requires recovery to preserve any value that may already have been committed on the fast path. 

If some replica committed a value from $n-f$ fast votes, then by quorum intersection a later recovery quorum is guaranteed to observe only $n-3f$ of those votes. Thus, for safety, recovery must treat those $n-3f$ votes as sufficient evidence that a value may have been committed on the fast path. This is because up to $f$ of the $n-f$ votes may be from Byzantine replicas, while asynchrony may cause the $n-f$ quorum to miss $f$ correct replicas.





\par \textbf{Main challenge.} The main challenge for recovery comes from satisfiying Invariant~\ref{thm:visibility-invariant} (Visibility) when it is possible for replicas to receive $n-3f$ fast votes for $(x,\emptyset)$ and $n-3f$ fast votes for $(y,\emptyset)$ ($x\bowtie y$), indicating that both could have gone fast path.

Consider for example two conflicting commands $x$ and $y$, and two disjoint sets of replicas $A$ and $B$, each of size $n-3f$. Replicas in $A$ receive $x$ before $y$ and send a fast vote for $x$ with empty dependencies. Replicas in $B$ receive $y$ before $x$ and send a fast vote for $y$ with empty dependencies. Even though neither fast path actually succeeds, recovery for $x$ may observe $n-3f$ fast votes for $x$ with empty dependencies, while recovery for $y$ may observe $n-3f$ fast votes for $y$ with empty dependencies. Since each value appears consistent with a possible fast path commit, both recovery paths may try to commit their respective values. However, committing both values would violate Invariant~\ref{thm:visibility-invariant} (Visibility): $x$ and $y$ would be committed with neither being in the dependency set of the other. The problem is that recovery cannot distinguish this execution from one in which the observed $n-3f$ votes are evidence of a real fast path commit.

\par \textbf{Key Idea.}
Our key idea is to make recovery collaborative. While it may not be possible to determine whether a particular command committed on the fast path in isolation, it may be possible to do so after learning about the outcome of other commands! Our main contribution is a deadlock-free recovery path that waits to learn the outcomes of precisely those commands whose decisions could affect the command being recovered.

To make recovery deadlock-free, our protocol imposes a fixed total order on instance identifiers. Recovery is only allowed to wait downward in this order: a higher-ranked instance may wait for a lower-ranked instance, but a lower-ranked instance never waits for a higher-ranked one. This makes the wait-for graph acyclic. The lowest-ranked pending recovery cannot be blocked by another pending recovery, and once it completes, higher-ranked recoveries can continue.

\begin{itemize}
    \item To determine which lower-ranked instances it must wait for, the recovery path begins with a validation phase. Replicas collect responses from $n-f$ replicas, each reporting any fast vote it sent for a lower-ranked conflicting instance. If any response contains such a vote, recovery waits for that instance to commit before proceeding. Any lower-ranked instance that is not reported during the validation phase can be safely ignored. At most $f$ correct replicas lie outside an $n-f$ quorum, and at most $f$ Byzantine replicas within the quorum could lie about their votes. Thus, at most $2f$ replicas could have sent fast votes for a conflicting value, which is fewer than the $n-3f$ threshold required to recover a value.

    \item Recovery therefore has two rules. First, it may only wait for lower-ranked instances, which prevents deadlock. Second, it must wait for every conflicting lower-ranked instance discovered during the validation phase, which preserves safety. After those lower-ranked instances commit, recovery chooses a value that is consistent with their committed outcomes. If the value supported by its own fast-vote evidence is still safe, it recovers that value. Otherwise, it recovers $\mathsf{noop}$, which is a special command that does not conflict with any other command.
\end{itemize}


\section{$n\geq 5f+1$ Protocol}
\label{sec:n-ge-5f-plus-1-protocol}

\providecommand{\AlgLine}[2]{Algorithm~\ref{#1}, line~\ref{#2}}
\providecommand{\AlgLines}[3]{Algorithm~\ref{#1}, lines~\ref{#2}--\ref{#3}}
Having sketched out an overview of the key techniques of the protocol, we now provide a detailed protocol specification for our \leaderless BFT state-machine replication
protocol with $n\geq 5f+1$ replicas. The protocol is described in Algorithm~\ref{alg:replica-part-1}. We assume that all messages are signed and that replicas verify both message signatures and message validity. For clarity, the pseudocode omits validation checks that ensure messages conform to the protocol specification. 



\par \textbf{Fast path.} Clients issue requests to replicas by sending a \msg{Req}{id,x} message, where $id$ is command identifier for a command payload $x$. Clients generate identifiers from taking a signed digest of ($c_{id}$, $seq$, $x$), where $c_{id}$ is a unique client id, $seq$ is a monontically increasing sequence number, and $x$ is the command payload.
When a replica receives a client request \msg{Req}{id,x}, either directly from
the client or forwarded by another replica it first records the command $x$ in $known$
and forwards the request to all replicas if this is the first time it has seen
the command
(\AlgLines{alg:replica-part-1}{line:p1-record-req}{line:p1-forward-req}).
If the replica has not already sent a fast vote or entered recovery
(\AlgLine{alg:replica-part-1}{line:p1-req-exit}), it computes the dependency set from conflicting commands in its local $known$ state
(\AlgLine{alg:replica-part-1}{line:p1-compute-deps}), records its fast vote
(\AlgLine{alg:replica-part-1}{line:p1-store-fastvote}), and broadcasts the
\textsc{FastVote} to all replicas
(\AlgLine{alg:replica-part-1}{line:p1-send-fastvote}).

A replica commits on the fast path after receiving at least $n-f$
\textsc{FastVote} messages with matching dependencies. It stores the resulting commit certificate, forwards the commit
certificate to all replicas, and commits the value
(\AlgLines{alg:replica-part-1}{line:p1-fast-commit-store}{line:p1-fast-commit}).
Importantly, replicas continue monitoring for fast votes and can commit on the fast path even if they have already entered the recovery path, and cannot send fast votes themselves. This is necessary for optimal execution latency: even in a synchronous conflict-free execution, a correct replica may enter recovery before it observes enough fast votes to commit on the fast path.

\par \textbf{Entering recovery.}
If a replica receives $n-f$ \textsc{FastVote} messages with different dependencies, it stores this set of messages and forwards this evidence to other replicas (\AlgLines{alg:replica-part-1}{line:p1-forward-nonmatch}{line:p1-store-nonmatch}). Forwarding these fast votes ensures that all correct replicas eventually enter recovery. A replica waits for $\Delta$ time before entering recovery (\AlgLine{alg:replica-part-1}{line:p1-wait-fast-path}). In synchronous conflict-free executions, this delay gives a client request enough time to reach the replica and prevents Byzantine replicas from forcing the replica into recovery before it has a chance to commit on the fast path. After this waiting period, replicas enter recovery.

Recovery consists of a sequence of views, where each view has a dedicated coordinator, similar to classical leader-based BFT protocols. Each view consists of four phases: a status phase, a validation phase, a proposal phase, and a voting phase. The status phase determines whether some value might already have committed in a previous recovery view or on the fast path. The validation phase determines whether the coordinator must wait
for lower-ranked conflicting instances before safely proposing a value. Once a safe value to propose has been identified, recovery runs standard proposal and voting phases, akin to traditional leader-based BFT protocols.



When entering any recovery view, a replica updates its view and stops replying to messages in earlier views
(\AlgLine{alg:replica-part-1}{line:p1-record-view}). After this point, the
replica sends no fast vote or messages from earlier views for that instance. The
replica starts a view timer
(\AlgLine{alg:replica-part-1}{line:p1-start-timer}) and sends the coordinator a
\textsc{Status} message containing the replica's fast vote and highest recovery
vote for the instance
(\AlgLine{alg:replica-part-1}{line:p1-send-status}).

\par \textbf{Status phase.}
The coordinator of the view waits for $n-f$ valid \textsc{Status} messages in the current view. If
the collected \textsc{Status} messages contain $n-3f$ recovery votes for the same value,
then that value might already have committed on the recovery path in an earlier
view. Since these votes were sent after the validation phase, the coordinator proposes it directly
(\AlgLine{alg:replica-part-1}{line:p2-propose-prior-val}). Otherwise, if the
\textsc{Status} messages contain $n-3f$ \textsc{FastVote} messages for the same value $(x,deps)$, then
that value might have committed on the fast path. In this case, the coordinator
starts validation by broadcasting a \textsc{Validate} message carrying the value and
the status certificate
(\AlgLine{alg:replica-part-1}{line:p2-send-validate}). If neither condition
holds, then the coordinator computes the union of the dependencies contained within the $n-f$ \textsc{FastVote} messages it used to enter view $1$, and proposes this value
(\AlgLines{alg:replica-part-1}{line:p2-compute-union}{line:p2-propose-union}).

\par \textbf{Validation phase.}
Validation is used only when the status phase finds a value that might have
committed on the fast path. Upon receiving a valid \textsc{Validate} message, a
replica first checks that it is still in the indicated view
(\AlgLine{alg:replica-part-2}{line:p2-validate-viewcheck}). It then records the
command $x$ in $known$
(\AlgLine{alg:replica-part-2}{line:p2-validate-known}), ensuring that any future
conflicting command learned by this replica must account for $x$ in its
dependency computation.

The replica next computes a set $wait$ of lower-ranked instances that the coordinator
must wait for before safely proposing $(x,deps)$
(\AlgLine{alg:replica-part-2}{line:p2-wait-init}). For every lower-ranked
instance $id'<id$ for which the replica sent a \textsc{FastVote} message, the replica checks
whether the earlier command $y$ for instance $id'$, conflicts with $x$, whether $id$ is absent from
the dependency set of $y$, and whether $id'$ is absent from $deps$
(\AlgLines{alg:replica-part-2}{line:p2-check-lower-instance}{line:p2-add-wait}). Violating a single one of these conditions is sufficient to guarantee that Invariant~\ref{thm:visibility-invariant} holds: either the commands commute (precondition does not hold) or at least one command has a dependency on the other. If all of these conditions hold, then committing both values would violate Invariant~\ref{thm:visibility-invariant}, so the replica includes the corresponding client
request in $wait$. The replica returns this information to the coordinator in a
\textsc{ValidateOk} message
(\AlgLine{alg:replica-part-2}{line:p2-send-validateok}).

The coordinator waits for $n-f$ valid \textsc{ValidateOk} messages, unions their $wait$
sets, and initializes the set of commit certificates it will collect
(\AlgLines{alg:replica-part-2}{line:p2-all-waits}{line:p2-commitcerts-init}).
For each lower-ranked request in the union, the coordinator forwards the request to
all replicas so that the corresponding instance will eventually make progress
(\AlgLine{alg:replica-part-2}{line:p2-forward-waited-req}). The coordinator then
waits for a commit certificate for that instance
(\AlgLine{alg:replica-part-2}{line:p2-wait-commit}), records the committed value
and certificate
(\AlgLines{alg:replica-part-2}{line:p2-get-committed}{line:p2-add-commit-cert}),
and checks whether the committed value conflicts with the proposal in
a way that violates safety
(\AlgLine{alg:replica-part-2}{line:p2-conflict-check}). If such a conflicting
commit is found, then the proposal could not have safely committed on the
fast path, and the coordinator proposes ($\mathsf{noop}$, $\emptyset$) instead
(\AlgLines{alg:replica-part-2}{line:p2-propose-noop-conflict}{line:p2-exit-after-noop}).
If all waited-on instances commit values that do not violate safety, then the
coordinator proposes the original value $(x,deps)$ together with the collected
evidence
(\AlgLine{alg:replica-part-2}{line:p2-propose-original}).

\par \textbf{Proposal and voting.}
A proposal contains the value for the current view together with enough evidence
for replicas to verify that the coordinator followed the status and validation rules. A replica verifies that the coordinator picked the correct value to propose based on the given evidence (for clarity we omit this check from the pseudocode).
When a replica receives a valid \textsc{Propose} message, it ignores proposals for
old views
(\AlgLines{alg:replica-part-2}{line:p2-propose-viewcheck}{line:p2-propose-view-exit})
and refuses to vote twice in the same view
(\AlgLines{alg:replica-part-2}{line:p2-propose-dupcheck}{line:p2-propose-dup-exit}).
If the proposed value is not $\mathsf{noop}$, the replica records the command in
$known$
(\AlgLines{alg:replica-part-2}{line:p2-propose-record-known-check}{line:p2-propose-record-known}).
It then stores its recovery vote and broadcasts a \textsc{Vote} message to all
replicas
(\AlgLines{alg:replica-part-2}{line:p2-record-vote}{line:p2-send-vote}).


\par \textbf{Commit.}
A recovery-path commit occurs when a replica receives $n-f$ matching
\textsc{Vote} messages in the same view. The replica stores the resulting commit
certificate, forwards it to all replicas, and commits the value
(\AlgLines{alg:replica-part-2}{line:p2-recovery-commit-store}{line:p2-recovery-commit}). When a replica commits either on the fast path or recovery path, it forwards all the \textsc{Req} messages in the committed dependency set, so that eventually these dependencies will also be committed (\AlgLines{alg:replica-part-2}{line:p1-forward-commit-req}{line:p1-forward-commit-req-end}).

\par \textbf{View change.} If progress stalls in a recovery view, the view timer expires and the replica
broadcasts a \textsc{ViewChange} message
(\AlgLine{alg:replica-part-2}{line:p1-send-viewchange}). Once a replica receives
$f+1$ such messages for a view, it enters the next view and forwards the
view-change evidence to all replicas
(\AlgLines{alg:replica-part-2}{line:p1-enter-next-view}{line:p1-forward-viewchange}).
This mechanism ensures view synchronization among correct replicas.

\par \textbf{Execution.} When a replica commits a command and set of dependencies, it must wait until the set of transitive dependencies have been committed before its ready to execute. Execution then follows the same algorithm as in the EPaxos family of algorithms~\cite{epaxos, epaxos-star}, which is stated in Algorithm~\ref{alg:execute-committed} for completeness.

\par \textbf{Resubmission.} A client may resubmit a command payload in case the previous instance committed $\mathsf{noop}$, in which case the client sends a new request together with a commit certificate of the result of the previous instance. Replicas check the validity of this certificate before voting.




\begin{algorithm}[htbp]
\caption{Replica $i$ protocol for $n\geq 5f+1$}
\label{alg:replica-part-1}
\begin{algorithmic}[1]

\State \label{line:p1-known}$known\gets \emptyset$\Comment{Set of received commands, used to compute dependency sets}
\State $fastvotes \gets \emptyset$\Comment{Fast vote sent per instance}
\State $fv\gets \emptyset$\Comment{Set of fast votes received per instance}
\State $votes \gets \emptyset$\Comment{Highest vote (by view) sent}
\State $commits \gets \emptyset$\Comment{Commit certificates}
\State \label{line:p1-views}$views \gets \emptyset$\Comment{Current view for each instance}

\medskip
\Event{receiving $r\gets $\msg{Req}{id, x} from client or replica}
    \If{$r\notin known$}
        \State \label{line:p1-record-req}$known\gets known\cup\{r\}$
        \State \label{line:p1-forward-req}\textbf{forward} \msg{Req}{id,x} to all
    \EndIf

    \If{$views[id]\geq 1$ or $fastvotes[id]\neq \bot$}
        \State \label{line:p1-req-exit}\textbf{return}
    \EndIf
    
    \State \label{line:p1-compute-deps}$deps\gets \{(id',$ \msg{Req}{id',y} $)\mid \exists$ \msg{Req}{id',y} $\in known \text{ such that } y \bowtie x\}$
    \State \label{line:p1-store-fastvote}$fastvotes[id]\gets (x,deps)$
    \State \label{line:p1-send-fastvote}\textbf{send} \msg{FastVote}{id,(x,deps)} to all
\EndEvent

\medskip
\Event{receiving $C\gets n-f$ matching \msg{FastVote}{id, (x, deps)}}\Comment{Can fast commit in higher views}
    \State \label{line:p1-fast-commit-store}$commits[id] \gets C$
    \State \label{line:p1-fast-commit-forward}\textbf{forward} $C$ to all
    \State \label{line:p1-fast-commit}\textbf{commit} $(id,(x, deps))$
\EndEvent

\medskip
\Event{receiving $F\gets n-f$ non-matching \msg{FastVote}{id,\cdot}}
    \State \label{line:p1-store-nonmatch} $fv[id] \gets F$
    \State \label{line:p1-forward-nonmatch}\textbf{forward} $F$ to all
    \State \label{line:p1-wait-fast-path} \textbf{wait} $\Delta$\Comment{Wait for possible fast path before entering view 1}
    \State \label{line:p1-enter-view1}\textbf{enter} view $1$ for instance $id$
\EndEvent

\medskip
\Event{entering view $v$ for instance $id$}
    \If{$views[id]\neq \bot$ and $views[id]\geq v$}\label{line:p1-stale-view-check}
        \State \textbf{return}
    \EndIf
    \State \label{line:p1-record-view}$views[id]\gets v$
        \Comment{After this point, replica $i$ sends no fastvote/earlier view vote for $id$}
    \State \label{line:p1-start-timer}\textbf{start} timer with duration $6\Delta$ for view $v$ in instance $id$
    \State \label{line:p1-send-status}\textbf{send} \msg{Status}{id, v, fastvotes[id], votes[id]} to $\Call{Coordinator}{id, v}$
\EndEvent

\medskip
\Event{receiving $S\gets n-f$ valid \msg{Status}{id, v, \cdot,\cdot} and $i=\Call{Coordinator}{id,v}$}
    \If{there exists $n-3f$ $votes[id]$ with $val$ from $S$}
        \State \label{line:p2-propose-prior-val}\textbf{send} \msg{Propose}{id,v,val,S,\bot,\bot} to all\Comment{Possibly went slow path}
    \ElsIf{there exists $n-3f$ $fastvotes[id]$ with $(x,deps)$ from $S$}
        \State \label{line:p2-send-validate}\textbf{send} \msg{Validate}{id, v, (x,deps), S} to all\Comment{Possibly went fast path}
    \Else
        \State $x\gets $ unique command in in $fv[id]$
        \State \label{line:p2-compute-union} $D\gets$ union of all dependencies from $fv[id]$
        \State \label{line:p2-propose-union} \textbf{send} \msg{Propose}{id, v, (x,D), S,\bot,\bot} to all
    \EndIf
\EndEvent

\algstore{myalg}

\end{algorithmic}
\end{algorithm}

\begin{algorithm}[htbp]
\ContinuedFloat
\caption{Replica $i$ Protocol, continued}
\label{alg:replica-part-2}
\begin{algorithmic}[1]
\algrestore{myalg}

\medskip
\Event{receiving valid \msg{Validate}{id,v,(x,deps),S}}
    \If{$views[id]\neq v$}\label{line:p2-validate-viewcheck}
        \State \textbf{exit upon}
    \EndIf

    \State \label{line:p2-validate-known}$known\gets known\cup\{$\msg{Req}{id,x}$\}$

    \State \label{line:p2-wait-init}$wait\gets \{\}$
    \For{$\forall id'<id$ such that $fastvotes[id']\neq \bot$}\label{line:p2-check-lower-instance}
        \State $(y,deps')\gets fastvotes[id']$
        \If{$y$ conflicts with $x$ and $id\notin deps'$ and $id'\notin deps$}
            \State \label{line:p2-add-wait}$wait\gets wait\cup\{$ \msg{Req}{id',y} $\}$\Comment{Client req message}
        \EndIf
    \EndFor

    \State \label{line:p2-send-validateok}\textbf{send} \msg{ValidateOk}{id,v,(x,deps),S,wait} to $\Call{Coordinator}{id,v}$
\EndEvent

\medskip
\Event{receiving $V\gets n-f$ valid \msg{ValidateOk}{id, v, (x,deps), S,wait} and $i=\Call{Coordinator}{id,v}$}
    \State \label{line:p2-all-waits}$all\_waits\gets$ union of all $wait$ sets
    \State \label{line:p2-commitcerts-init}$commit\_certs \gets \{\}$
    \For{$\forall$ \msg{Req}{id',c'} $\in all\_waits$}
        \State \label{line:p2-forward-waited-req}\textbf{send} \msg{Req}{id',c'} to all\Comment{Ensures other correct replicas vote in $id'$}
        \State \label{line:p2-wait-commit}\textbf{wait} until $commits[id']\neq \bot$ before proceeding
        \State \label{line:p2-get-committed}$val'\gets$ committed value from $commits[id']$
        \State \label{line:p2-add-commit-cert}$commit\_certs\gets commit\_certs \cup \{commits[id']\}$
        \If{$val'=(y,D')$ and $y$ conflicts with $x$
            and $id\notin D'$ and $id'\notin deps$}\label{line:p2-conflict-check}
            \State \label{line:p2-propose-noop-conflict}\textbf{send} \msg{Propose}{id,v,noop,S,V,commits[id']} to all
            \State \label{line:p2-exit-after-noop}\textbf{exit upon}
        \EndIf
    \EndFor
    
    \State \label{line:p2-propose-original}\textbf{send} \msg{Propose}{id, v, (x, deps), S, V, commit\_certs} to all\Comment{no conflicting commit found}
\EndEvent

\medskip
\Event{receiving valid \msg{Propose}{id, v, val, \cdot,\cdot,\cdot}}
    \If{$views[id]\neq v$}\label{line:p2-propose-viewcheck}
        \State \label{line:p2-propose-view-exit}\textbf{exit upon}
        \Comment{Do not vote in an old view}
    \EndIf
    \If{already voted for instance $id$ in view $v$}\label{line:p2-propose-dupcheck}
        \State \label{line:p2-propose-dup-exit}\textbf{exit upon}
    \EndIf
    \If{$val=(x,deps)$}\label{line:p2-propose-record-known-check}
        \State \label{line:p2-propose-record-known}$known\gets known\cup\{$\msg{Req}{id,x}$\}$
    \EndIf
    \State \label{line:p2-record-vote}$votes[id] \gets (v,val)$
    \State \label{line:p2-send-vote}\textbf{send} \msg{Vote}{id, v, val} to all
\EndEvent

\medskip
\Event{receiving $C\gets n-f$ \msg{Vote}{id, v, val}}
    \State \label{line:p2-recovery-commit-store}$commits[id] \gets C$
    \State \label{line:p2-recovery-commit-forward}\textbf{forward} $C$ to all
    \State \label{line:p2-recovery-commit}\textbf{commit} $(id, val)$
\EndEvent

\medskip
\Event{committing $(id,(x,D))$}
    \For{$(id,msg)\in D$}\label{line:p1-forward-commit-req}
        \State $\textbf{forward}$ $msg$ to all
    \EndFor\label{line:p1-forward-commit-req-end}
\EndEvent

\medskip
\Event{view $v$ timer for instance $id$ expiring}
    \State \label{line:p1-send-viewchange}\textbf{send} \msg{ViewChange}{id, v} to all
\EndEvent

\medskip
\Event{receiving $V\gets f+1$ \msg{ViewChange}{id, v}}
    \State \label{line:p1-enter-next-view}\textbf{enter} view $v+1$ for instance $id$
    \State \label{line:p1-forward-viewchange}\textbf{forward} $V$ to all
\EndEvent

\end{algorithmic}
\end{algorithm}

\section{Lower Bound for \leaderlesscaps{} BFT SMR}
\label{sec:lower-bound-generalized-leaderless}

\newcommand{\steps}[3]{\ensuremath{{\mathbf{#1}}^{#2}}_{#3}}
\newcommand{\run}{\ensuremath{\sigma}}
\newcommand{\indist}{\ensuremath{\thicksim}}
\newcommand{\indistinguishable}[1]{\ensuremath{\stackrel{\mathrm{#1}}{\indist}}}
\newcommand{\byz}[1]{{\color{red}{#1}}}
\newcommand{\submit}[1]{\ensuremath{\mathit{submit}(#1)}}
\newcommand{\receive}[1]{\ensuremath{\mathit{receive}(#1)}}
\newcommand{\execute}[2]{\ensuremath{\mathit{execute}_{#2}(#1)}}
\newcommand{\E}{\ensuremath{\mathcal{E}}}

We now show that the resilience bound $n \ge 5f+1$ is necessary. Thus, the protocol from the previous section is optimal with respect to resilience. The proof follows the structure of the FaB Paxos lower bound~\cite{fab}, but adapts the argument to the \leaderless BFT SMR setting. Throughout this section, we rely on the optimal execution latency, liveness, and safety properties stated in Section~\ref{sec:model}.

\par \textbf{Intuition.}
The core difficulty is that, when $n=5f$, correct replicas can be equally split on two different values. Consider an execution in any \leaderless BFT SMR protocol, in which $2f$ correct replicas observe $x$ before a conflicting command $y$, another $2f$ correct replicas observe $y$ before $x$, and the remaining $f$ replicas are Byzantine. The Byzantine replicas can equivocate, making one correct replica see $3f$ votes supporting $x$ before $y$, while another correct replica sees $3f$ votes supporting $y$ before $x$.

Now focus on the replica that sees $3f$ votes for $x$ before $y$. From its perspective, this execution is indistinguishable from another execution in which $3f$ correct replicas actually voted for $x$ before $y$, $f$ correct replicas voted for $y$ before $x$, and $f$ replicas are Byzantine. In this execution, the Byzantine replicas can behave so that some correct replica observes $4f$ votes for $x$ before $y$. Since $n-f=4f$, this is indistinguishable from an execution in which all correct replicas vote $x$ before $y$. Therefore, by the latency and liveness requirements, the protocol can be forced to commit $x$ before $y$.

By symmetry, another correct replica can be forced to commit $y$ before $x$. Thus, correct replicas may execute conflicting commands in different orders, violating generalized \leaderless BFT SMR safety.

\par \textbf{Overview.}
Our lower bound uses an indistinguishability argument to reach a contradiction. We split replicas into five disjoint groups each of size at most $f$ ($A$, $B$, $C$, $D$, and $E$), and construct five executions ($\run_1,\cdots,\run_5$). 

The construction proceeds in three stages.
First, $\run_1$ and $\run_5$ are conflict-free synchronous executions for $x$ and $y$, respectively.
By the optimal execution latency property, the correct replicas in $A$ execute $x$ by time $2\Delta$ in $\run_1$ and execute $y$ by time $2\Delta$ in $\run_5$.
In $\run_1$ and $\run_5$, respectively, the replicas $E$ and $D$ are Byzantine and act crashed from the second round.

Second, we construct $\run_3$, the execution in which the contradiction arises.
Both commands are submitted, the replicas in $A$ are Byzantine and act crashed after the first round, and the adversary splits the remaining replicas:
those in $B \cup D$ receive $x$ at time $\Delta$ and $y$ at time $2\Delta$, those in $C \cup E$ the converse.
During the first two rounds the former therefore behave exactly as they do in $\run_1$, and the latter exactly as they do in $\run_5$.
Since both clients are correct and $|A| \le f$, liveness forces every correct replica to eventually execute both commands;
let $R$ and $S$ be the rounds at which $b \in B$ and $c \in C$ execute their first command.

Third, we show that $b$ executes $x$ before $y$ in $\run_3$ while $c$ executes $y$ before $x$, contradicting safety.
For $b$, we build an execution $\run_2$ in which the replicas in $C$ are Byzantine and equivocate:
toward $A$ they behave as they do in $\run_1$, and toward the other replicas as they do in $\run_3$.
Replicas in $A$ thus cannot distinguish $\run_2$ from $\run_1$ through time $2\Delta$ and executes $x$ by that time, while external validity prevents it from having executed $y$; liveness and safety then force $b$ to execute $x$ before $y$ in $\run_2$.
Since $b$ cannot distinguish $\run_2$ from $\run_3$ through round $R$, the same holds in $\run_3$.
The execution $\run_4$ treats $c$ symmetrically, with the replicas in $B$ Byzantine and $\run_5$ in place of $\run_1$.

\par \textbf{Preliminaries.}
For some execution, we say that events that happen at the replicas during the time interval $[0, \Delta)$ form {\em the first round}, events that happen during the time interval $[\Delta, 2\Delta)$ {\em the second round}, and so on.
Given a set of replicas $R$ and an execution $\run$, $\steps{K}{R}{\run}$ denotes the steps taken by $R$ in the $K$-th round of $\run$.
Steps taken by Byzantine replicas are in \byz{red}.
An execution is \emph{timely}~\cite{two-step,revisitingBFTConsensus} when the faulty replicas behave correctly until time $\Delta$, after which they act crashed, the messages sent in a round are delivered precisely at the beginning of the next round ($GST=0$), and all local computations are instantaneous.
In a timely execution, because the protocol is deterministic, the steps of replicas $R$ during the first round are always the same;
they are written $\steps{1}{R}{}$.
%

\begin{theorem}[\leaderlesscaps{} BFT lower bound]
\label{thm:leaderless-lower-bound}
Any deterministic \leaderless BFT SMR protocol requires $n \ge 5f+1$ replicas.
\end{theorem}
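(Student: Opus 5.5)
We argue by contradiction: suppose a deterministic \leaderless BFT SMR protocol works with $n=5f$ replicas (the case $n<5f$ reduces to it by letting some groups be smaller, or by having additional Byzantine-silent replicas). We partition the replicas into five disjoint groups $A,B,C,D,E$ of size $f$ and take two conflicting commands $x\bowtie y$ submitted at time $0$ by correct clients. We then build the five executions $\run_1,\dots,\run_5$ from the overview and chain indistinguishability between them.

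\textbf{Step 1 (anchor executions).} $\run_1$ is timely with only $x$ submitted, so $x$ is conflict-free. In it, $E$ is Byzantine and crashes after round one, and all replicas receive $x$ at time $\Delta$. By optimal execution latency, every correct replica, in particular those in $A$, executes $x$ by time $2\Delta$. Symmetrically, in $\run_5$ only $y$ is submitted, $D$ is Byzantine and crashed, and $A$ executes $y$ by $2\Delta$.

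\textbf{Step 2 (the split execution $\run_3$).} In $\run_3$ both commands are submitted and $A$ is Byzantine, behaving correctly in round one and then crashing. Replicas in $B\cup D$ receive $x$ at $\Delta$ and $y$ at $2\Delta$; replicas in $C\cup E$ receive the opposite. Messages between $B\cup D$ and $C\cup E$ are scheduled so that, through round two, $B\cup D$ see exactly what they see in $\run_1$ (where $E$ is silent) and $C\cup E$ see exactly what they see in $\run_5$ (where $D$ is silent). Afterward we let GST hold, so liveness forces each correct replica to execute both commands. Let $R$ be the round in which $b\in B$ executes its first of $x,y$, and $S$ the corresponding round for $c\in C$.

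\textbf{Step 3 (pinning the order).} We build $\run_2$ in which $C$ is Byzantine. Toward $A$, $C$ plays its $\run_1$ behaviour; toward everyone else, it plays its $\run_3$ behaviour. Also $A$ is correct but its messages to others after round one are delayed beyond round $R$, so that $b$ sees exactly $\run_3$ through round $R$. Then $A$ sees $\run_1$ through $2\Delta$: from $B\cup D$ directly, from $C$ by equivocation, and from $E$ silence, which we arrange by delaying $E$'s messages to $A$. Hence $A$ executes $x$ by $2\Delta$. By external validity $A$ cannot have executed $y$ before $y$ reached it, and we delay $y$ to $A$ past $2\Delta$. So a correct replica executes $x$ before $y$, and safety plus liveness in $\run_2$ force $b$ to execute $x$ before $y$. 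Since $\run_2\indist\run_3$ for $b$ through round $R$, $b$ executes $x$ first in $\run_3$. Symmetrically, $\run_4$ with $B$ Byzantine, together with $\run_5$, forces $c$ to execute $y$ first in $\run_3$. This contradicts safety in $\run_3$, where $b$ and $c$ are both correct.

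\textbf{Main obstacle.} The delicate step is the consistency of the message schedules. Delaying $A$'s and $E$'s messages in $\run_2$ is only legitimate before GST, so we must set GST after $\max(R,2\Delta)$ in $\run_2$ while $\run_3$ remains a valid execution. We must also ensure that the $\run_1$ behaviour $C$ exhibits toward $A$ agrees with $C$'s round-one steps, which are $\steps{1}{C}{}$ in both executions because first-round steps depend only on received requests. I also need to check carefully that round-two messages from $B\cup D$ to $A$ are identical in $\run_1$ and $\run_2$; this uses the fact that $B\cup D$ receive $y$ only at $2\Delta$.
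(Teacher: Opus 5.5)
Your proposal is correct and follows the paper's proof essentially step for step: the same five worlds $\sigma_1,\dots,\sigma_5$, the same split of $B\cup D$ versus $C\cup E$ in $\sigma_3$, the same equivocation by $C$ (resp.\ $B$) toward $A$ versus everyone else in $\sigma_2$ (resp.\ $\sigma_4$), and the same choice of $\mathrm{GST} > \max\{R\Delta, 2\Delta\}$. The paper states two details that you leave implicit: the Byzantine $C$ must itself receive both $x$ and $y$ by time $\Delta$ so it can reproduce client-signed content toward both sides, and $a$ not executing $y$ by $2\Delta$ follows from $a$'s history matching $\sigma_1$, where $y$ is never submitted, rather than from external validity applied in $\sigma_2$ alone.
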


\begin{proof}
Assume, for contradiction, that there exists a deterministic \leaderless BFT SMR protocol satisfying the stated properties with $n \le 5f$ replicas.
Let $\E$ be its set of executions.
Fix two valid conflicting commands $x$ and $y$. Since $3f+1 \leq n \le 5f$, partition the replicas into disjoint sets $A,\;B,\;C,\;D,\;E$ such that $|A|, |B|, |C|, |D|, |E| \le f$.
Fix replicas $a \in A$, $b \in B$, and $c \in C$.
We construct five executions in $\E$, denoted $\run_1,\run_2,\ldots,\run_5$.
Executions $\run_1$ and $\run_5$ are timely.

Since the protocol is deterministic, identical timed local histories
imply identical behavior.



\par \textbf{World $1$ ($\run_1$).}
This execution is timely.
Replicas in $E$ are Byzantine, and behave correctly until time $\Delta$, after which they act crashed.
A correct client $cl_x$ submits the valid command $x$ at time $0$ and no other command is submitted.
Every replica in $A \cup B \cup C \cup D$ receives $x$ at exactly time $\Delta$ (since the execution is timely).
Messages among all replicas in the first round are all received at the beginning of the second round.
Since $|E| \le f$, this is a post-GST execution with at most $f$ Byzantine replicas.
Therefore, by optimal execution latency, every correct replica in $A \cup B \cup C \cup D$ executes $x$ by time $2\Delta$.
In particular, replica $a$ executes $x$ by time $2\Delta$:
$$
\submit{x} \cdot \steps{1}{A \cup B \cup C \cup D}{} \cdot \byz{\steps{1}{E}{}} \cdot \steps{2}{A \cup B \cup C \cup D}{\run_1} \cdot \execute{x}{a} \in \E
$$

\par \textbf{World $5$ ($\run_5$).}
This execution is timely and symmetric to $\run_1$.
Replicas in $D$ are Byzantine, and behave correctly until time $\Delta$, after which they act crashed.
A correct client $cl_y \neq cl_x$ submits the valid command $y$ at time $0$ and no other command is submitted.
Every replica in $A \cup B \cup C \cup E$ receives $y$ at exactly time $\Delta$.
Messages among all replicas in the first round are received at the beginning of the second round.
Since $|D| \le f$, this is a post-GST execution with at most $f$ Byzantine replicas.
Therefore, by optimal execution latency, every correct replica in $A \cup B \cup C \cup E$ executes $y$ by time $2\Delta$.
In particular, replica $a$ executes $y$ by time $2\Delta$:
$$
\submit{y} \cdot \steps{1}{A \cup B \cup C \cup E}{} \cdot \byz{\steps{1}{D}{}} \cdot \steps{2}{A \cup B \cup C \cup E}{\run_5} \cdot \execute{y}{a} \in \E
$$

\par \textbf{World $3$ ($\run_3$).}
Replicas in $A$ are Byzantine, and behave correctly until time $\Delta$, after which they act crashed.
Unless stated otherwise, messages are scheduled so that when they are sent in one round, they are delivered at the beginning of the next round.
All local computations are instantaneous.
Commands $x$ and $y$ are submitted by the same two correct clients as in $\run_1$ and $\run_5$.

The adversary delays messages from clients so that replicas in $B \cup D$ receive $x$ at time $\Delta$ and $y$ at time $2\Delta$.
Conversely, replicas in $C \cup E$ receive $y$ at time $\Delta$ and $x$ at time $2\Delta$.
From what precedes, these two sets of replicas take the exact same first steps in $\run_3$ as the ones they do during the first two rounds of $\run_1$ and $\run_5$, respectively:
$$
\run'_3 = \submit{x} \cdot \submit{y} \cdot \steps{1}{B \cup C \cup D \cup E}{} \cdot \byz{\steps{1}{A}{}} \cdot \steps{2}{B \cup D}{\run_1} \cdot \steps{2}{C \cup E}{\run_5}  \in \E
$$

The two clients that submitted commands $x$ and $y$ are correct and only replicas in $A$ are Byzantine, with $|A| \le f$.
Hence any continuation from $\run'_3$ must uphold the properties of \leaderless BFT SMR.
In particular, by liveness, all correct replicas eventually execute both $x$ and $y$.

Let $R$ (respectively $S$) be the first round at which replica $b$ (resp., $c$) executes a command, either $x$ or $y$.
Without loss of generality, suppose $R \leq S$:
$$
\run'_3 \cdots \steps{R}{B \cup C \cup D \cup E }{\run_3} \cdots \steps{S}{B \cup C \cup D \cup E }{\run_3} \in \E
$$
We will show that, in execution $\run_3$, replica $b$ must execute $x$ before $y$, while replica $c$ must execute $y$ before $x$.

\par \textbf{World $2$ ($\run_2$).}
In $\run_2$, replicas in $C$ are Byzantine. Let $\mathrm{GST} > \max\{R\Delta,2\Delta\}$.
Thus, until time $\max\{R\Delta,2\Delta\}$, the adversary may delay messages arbitrarily.
Both $x$ and $y$ are submitted by clients.
These two commands are received during the first round by $C$.
The adversary delays messages from the clients so that replicas in $A \cup B \cup D$ receive $x$ exactly at time $\Delta$, and replicas in $E$ receive $y$ exactly at time $\Delta$.
The adversary delays any message containing or referring to $y$ to replica $a$ until after time $2\Delta$.
In the first round, replicas in $A \cup B \cup D$ (respectively, $E$) take the exact same steps as they do during the first round of $\run_1$ (resp., $\run_5$).
$$
\run'_2 = \submit{x} \cdot \submit{y} \cdot \steps{1}{A \cup B \cup D}{} \cdot \steps{1}{E}{} \in \E
$$

The Byzantine replicas in $C$ equivocate as follows. Toward replicas in $A$, they behave exactly as the correct replicas in $C$ behave in $\run_1$ through time $2\Delta$; in particular, they act as if they received $x$ at exactly time $\Delta$. Toward replicas outside $A$, they behave exactly as the correct replicas in $C$ behave in $\run_3$ through time $R\Delta$; in particular, they act as if they received $y$ at exactly time $\Delta$.

Messages from replicas in $A$ to replicas outside $A$ that are sent in later rounds than the first one are delayed until after GST.
The remaining messages are scheduled so that replica $a$'s timed local history through time $2\Delta$ is identical as in $\run_1$, and replica $b$'s timed local history through time $R\Delta$ is identical as in $\run_3$.

The first indistinguishability holds because, through time $2\Delta$, replica $a$ receives first round messages from all replicas and $x$ at exactly time $\Delta$, receives the same second round messages from replicas in $A \cup B \cup C \cup D$ as in $\run_1$, receives no additional messages from $E$, and receives no message carrying or referring to $y$.

Therefore, since $a$ executes $x$ by time $2\Delta$ in $\run_1$, determinism implies that $a$ also executes $x$ by time $2\Delta$ in $\run_2$. Since $a$ has identical timed histories in $\run_1$ and $\run_2$ through $2\Delta$ and does not execute $y$ in $\run_1$ (no client submitted $y$ in $\run_1$, so by external validity $a$ cannot execute $y$ in $\run_1$), $a$ could not have executed $y$ by $2\Delta$ in $\run_2$.
In other words,
$$
\run''_2 = \run'_2 \cdot \byz{\steps{1}{C}{} \cdot \steps{2}{C}{\run_1}} \cdot \steps{2}{A}{\run_1} \cdot \steps{2}{B \cup D}{\run_1} \cdot \execute{x}{a}  \in \E
$$

The second indistinguishability holds because, in $\run_3$, all replicas receive the same first round messages from all replicas, replicas in $A$ are Byzantine and silent after the first round, while in $\run_2$ all messages after the first round from replicas in $A$ to replicas outside $A$ are delayed until after time $R\Delta$.
Moreover, replicas in $B \cup D$ receive commands $x$ and $y$ at respectively times $\Delta$ and $2\Delta$ in both executions, replicas in $E$ receive $y$ at exactly time $\Delta$ in both executions, and Byzantine replicas in $C$ send to replicas outside $A$ exactly the messages that correct replicas in $C$ send in $\run_3$ through time $R\Delta$.
$$
\run_2 = \run''_2 \cdot \byz{\steps{2}{C}{\run_5}} \cdot \steps{2}{E }{\run_5} \cdots \steps{R}{B \cup C \cup D \cup E }{\run_3} \in \E
$$

Since replica $a$ executes $x$ before $y$, replica $b$ must execute $x$ before $y$.
Thus, replica $b$ applies $x$ in round $R$: $\execute{x}{b} \in \steps{R}{b }{\run_3}$. 





\par \textbf{World $4$ ($\run_4$).}
The construction is symmetric to $\run_2$.
The Byzantine replicas in $B$ equivocate as follows.
Toward replicas in $A$, they behave exactly as the correct replicas in $B$ behave in $\run_5$ through time $2\Delta$; in particular, they act as if they received $y$ at exactly time $\Delta$.
Toward replicas outside $A$, they behave exactly as the correct replicas in $B$ behave in $\run_3$ through time $S\Delta$; in particular, they act as if they received $x$ at exactly time $\Delta$.
Below, we list the equations which lead to the fact that replica $c$ executes $y$ first in $\run_3$:
$$
\begin{array}{l}
  \run'_4 = \submit{x} \cdot \submit{y} \cdot \steps{1}{A \cup C \cup E}{} \cdot \steps{1}{D}{} \in \E \\[.5em]
  
  \run''_4 = \run'_4 \cdot \byz{\steps{1}{B}{} \cdot \steps{2}{B}{\run_5}} \cdot \steps{2}{A}{\run_5} \cdot \steps{2}{C \cup E}{\run_5} \cdot \execute{y}{a} \in \E \\[.5em]
  
  \run_4 = \run''_4 \cdot \byz{\steps{2}{B}{\run_1}} \cdot \steps{2}{D}{\run_1} \cdots \steps{S}{B \cup C \cup D \cup E }{\run_3} \in \E
\end{array}
$$

\par \textbf{Conclusion.}
We have established that in $\run_3$ replica $b$ executes $x$ before $y$, while replica $c$ executes $y$ before $x$.
Since both replicas are correct, $\run_3$ violates the Safety property of BFT SMR;
contradiction.

\end{proof}

\section{$n\geq 7f+1$ Protocol}

In this section we present a simpler, but suboptimal, generalized leaderless SMR protocol for the case $n\geq 7f+1$. The benefit of this protocol is that the larger replication factor makes the validation phase unnecessary. Thus, the $n\geq 7f+1$ protocol exposes a useful trade-off: it requires more replicas, but has a simpler and more efficient recovery path. In settings where recovery overhead is a throughput bottleneck, this simpler recovery procedure may lead to higher throughput per machine despite the larger replication factor~\cite{suyash}.

The difficulty in the $n\geq 5f+1$ protocol is that recovery may only see $n-3f\geq 2f+1$ matching \textsc{FastVote} messages for the committed value. When $n\geq 5f+1$, two quorums of size $n-3f$ may not intersect in a correct replica. Thus, two conflicting commands could each have enough \textsc{FastVote} messages to be proposed during recovery, yet the replicas that voted for one command might contain no correct replica that also voted for the other. Thus, the $n\geq 5f+1$ protocol adds an additional validation phase to ensure that conflicting commands cannot miss each other during recovery.

For $n\geq 7f+1$, this problem disappears. Any two sets of $n-3f$ replicas intersect in at least one correct replica. Therefore, if two conflicting commands each have $n-3f$ \textsc{FastVote} messages, then some correct replica fast-voted for both commands. A correct replica includes every known conflicting command in the dependency set of any new command for which it sends a \textsc{FastVote} message. Hence, whichver command the replica voted for second must incude the first as a dependency. We can leverage this guarantee to simplify the recovery protocol and fully remove the validation phase.

The Status phase of the protocol thus changes in the following way. If the coordinator observes $n-3f$ matching prior \textsc{Vote} messages, it reproposes that value. Similarly, if it observes $n-3f$ matching \textsc{FastVote} messages, it can directly repropose that value. Otherwise, if the the fast votes disagree on the dependency set; the coordinator the command with the union of all reported dependencies. This is correct because the $n-3f$ quorums intersect with each other. All other phases of the protocol remain unchanged.

Algorithm~\ref{alg:replica-sevenf-plusone} highlights the differences from the $n\geq 5f+1$ protocol.





\begin{algorithm}[htbp]
\caption{Modified Algorithm~\ref{alg:replica-part-1} for $n\geq 7f+1$}
\label{alg:replica-sevenf-plusone}
\begin{algorithmic}[1]

\Statex \diffnote{All local state and handlers are unchanged from
Algorithm~\ref{alg:replica-part-1}, except that the status-phase handler is
replaced by the following.}

\medskip
\Event{receiving $S\gets n-f$ valid \msg{Status}{id, v, \cdot,\cdot} and $i=\Call{Coordinator}{id,v}$}
    \If{there exists $n-3f$ $votes[id]$ with $val$ from $S$}
        \State \textbf{send} \msg{Propose}{id,v,val,S,\bot,\bot} to all
        \Comment{Possibly committed in an earlier view}

    \ElsIf{there exists $n-3f$ $fastvotes[id]$ with $val=(x,deps)$ from $S$}
        \State \diffdel{\textbf{send} \msg{Validate}{id, v, (x,deps), S} to all}
        \State \diffadd{\textbf{send} \msg{Propose}{id,v,val,S,\bot,\bot} to all}
        \Comment{\diffadd{Validation phase is unnecessary}}
        
    \Else
        \State $x\gets$ unique non-equivocating command in the collected \textsc{FastVote} messages
        \State $D\gets$ union of all dependencies from $fv[id]$
        \State \textbf{send} \msg{Propose}{id,v,(x,D),S,\bot,\bot} to all
    \EndIf
\EndEvent

\Statex
\Statex \diffnote{The validation-phase handlers from Algorithm~\ref{alg:replica-part-1}
are removed. Furthemore, the view timer is set to $4\Delta$ instead of $6\Delta$.}

\end{algorithmic}
\end{algorithm}

\section{Related Work}
\label{sec:related-work}

We compare leaderless Byzantine state-machine replication (SMR) with several related lines of work. 

\subsection{CFT Protocols}

\par \textbf{Crash-fault-tolerant generalized SMR.}
The EPaxos family of protocols satisfies a leaderless generalized SMR style of
execution in the crash-fault-tolerant setting. EPaxos~\cite{epaxos} and its
successors, including Atlas~\cite{atlas}, Caesar~\cite{caesar},
EPaxos-TOQ~\cite{epaxos-revisited}, Accord~\cite{accord},
EPaxos*~\cite{epaxos-star}, and BPaxos~\cite{bpaxos}, allow replicas to
propose commands independently and order commands by tracking dependencies
between conflicting operations. These protocols are the closest conceptual
predecessors to the abstraction studied in this paper. However, they rely
heavily on the crash-fault model. Byzantine faults fundamentally complicate the dependency agreement
problem, and crash-fault-tolerant techniques do not directly extend to the BFT
setting. This paper studies the additional cost of obtaining EPaxos-style
leaderless SMR in the presence of Byzantine behavior.

\subsection{BFT Protocols}
None of the work mentioned below achieve the optimal two-message-delay execution latency for every conflict-free command during synchronous periods while tolerating up to $f$ failures.

\par \textbf{Leader-based BFT.}
Classical BFT SMR protocols such as PBFT~\cite{pbft} and HotStuff~\cite{hotstuff} order commands through a leader. This design achieves
the optimal Byzantine resilience bound of $n = 3f+1$, but commands must pass
through a leader-controlled agreement path before they can be executed. In PBFT-style protocols, a client first sends its request to the
leader, after which the replicas run several communication phases to order and
commit the command. As a
result, even conflict-free commands take multiple message delays
to execute. 

PBFT~\cite{pbft} and other leader-based protocols like HotStuff-1~\cite{hotstuff1}, and PoE~\cite{poe} also support speculative execution. These protocols can execute
early by allowing the system to roll back, repair, or later validate tentative
execution results. Our execution latency property concerns final execution:
when a correct replica executes a command, that execution is stable and is not
later rolled back.

\par \textbf{Three message-delay fast path.}
A large body of work reduces the optimistic latency of leader-based BFT
protocols. Examples include FaB Paxos~\cite{fab},
Zyzzyva~\cite{zyzzyva}, SBFT~\cite{sbft},
Kudzu~\cite{kudzu}, and Alpenglow~\cite{alpenglow}. These protocols have a replication factor of $n\geq 3f+2p+1$, where $p$ is the number of faulty replicas that can be tolerated on the fast path. They provide a fast path in which a proposal can be committed in three total message delays including the client's initial request. This is faster than traditional PBFT-style protocols, but the fast path remains leader-based. 

\par \textbf{Two message delays + epsilon fast path.}
Flutter/Blink~\cite{flutter} and Aspen~\cite{aspen} execute conflict-free commands without being sequenced through a leader, but they rely on tightly synchronized clocks. Their conflict-free latency under synchrony is $2\Delta + \epsilon$, where $\epsilon$ accounts for clock discrepancies. In contrast, our protocol achieves latency $2\Delta$ without requiring tightly synchronized clocks.

\par \textbf{Optimistic two message delay fast path.} Other protocols achieve low latency only under more restrictive failure
assumptions. Wintermute~\cite{leaderless-thesis} and Basil~\cite{basil} have fast paths that can execute
conflict-free commands in two message delays, but that fast path does not tolerate
failures. Q/U~\cite{qu} and Aliph~\cite{aliph} similarly 
not tolerate failures on its fast path. These protocols demonstrate that
leaderless conflict-free execution is possible in favorable executions, but
they do not provide the same failure-tolerant optimal execution latency guarantee required by
our definition.

\section{Conclusion}
Leaderless BFT SMR has historically been less well understood than its crash-fault-tolerant counterpart. This paper narrows that gap by establishing tight bounds on the replication factor required for leaderless BFT SMR. We also illustrate a trade-off between replication cost and recovery efficiency by presenting a simpler protocol with a larger replication factor.

\bibliographystyle{plain}
\bibliography{references}

@inproceedings{raft,
author = {Ongaro, Diego and Ousterhout, John},
title = {In search of an understandable consensus algorithm},
year = {2014},
isbn = {9781931971102},
publisher = {USENIX Association},
address = {USA},
booktitle = {Proceedings of the 2014 USENIX Conference on USENIX Annual Technical Conference},
pages = {305–320},
numpages = {16},
location = {Philadelphia, PA},
series = {USENIX ATC'14}
}

@article{multi-paxos,
author = {Van Renesse, Robbert and Altinbuken, Deniz},
title = {Paxos Made Moderately Complex},
year = {2015},
issue_date = {April 2015},
publisher = {Association for Computing Machinery},
address = {New York, NY, USA},
volume = {47},
number = {3},
issn = {0360-0300},
url = {https://doi.org/10.1145/2673577},
doi = {10.1145/2673577},
journal = {ACM Comput. Surv.},
month = feb,
articleno = {42},
numpages = {36}
}

@article{linearizability,
author = {Herlihy, Maurice P. and Wing, Jeannette M.},
title = {Linearizability: a correctness condition for concurrent objects},
year = {1990},
issue_date = {July 1990},
publisher = {Association for Computing Machinery},
address = {New York, NY, USA},
volume = {12},
number = {3},
issn = {0164-0925},
url = {https://doi.org/10.1145/78969.78972},
doi = {10.1145/78969.78972},
journal = {ACM Trans. Program. Lang. Syst.},
month = jul,
pages = {463–492},
numpages = {30}
}

@inproceedings{epaxos,
author = {Moraru, Iulian and Andersen, David G. and Kaminsky, Michael},
title = {There is more consensus in Egalitarian parliaments},
year = {2013},
isbn = {9781450323888},
publisher = {Association for Computing Machinery},
address = {New York, NY, USA},
url = {https://doi.org/10.1145/2517349.2517350},
doi = {10.1145/2517349.2517350},
booktitle = {Proceedings of the Twenty-Fourth ACM Symposium on Operating Systems Principles},
pages = {358–372},
numpages = {15},
location = {Farminton, Pennsylvania},
series = {SOSP '13}
}

@inproceedings {epaxos-revisited,
author = {Sarah Tollman and Seo Jin Park and John Ousterhout},
title = {{EPaxos} Revisited},
booktitle = {18th USENIX Symposium on Networked Systems Design and Implementation (NSDI 21)},
year = {2021},
isbn = {978-1-939133-21-2},
pages = {613--632},
url = {https://www.usenix.org/conference/nsdi21/presentation/tollman},
publisher = {USENIX Association},
month = apr
}

@article{smr,
  author     = {Schneider, Fred B.},
  title      = {Implementing Fault-Tolerant Services Using the State Machine Approach: A Tutorial},
  journal    = {ACM Computing Surveys},
  volume     = {22},
  number     = {4},
  pages      = {299--319},
  month      = dec,
  year       = {1990},
  publisher  = {ACM New York, NY, USA},
  doi        = {10.1145/98163.98167},
  url        = {https://dl.acm.org/doi/10.1145/98163.98167}
}

@inproceedings{atlas,
author = {Enes, Vitor and Baquero, Carlos and Rezende, Tuanir Fran\c{c}a and Gotsman, Alexey and Perrin, Matthieu and Sutra, Pierre},
title = {State-machine replication for planet-scale systems},
year = {2020},
isbn = {9781450368827},
publisher = {Association for Computing Machinery},
address = {New York, NY, USA},
url = {https://doi.org/10.1145/3342195.3387543},
doi = {10.1145/3342195.3387543},
booktitle = {Proceedings of the Fifteenth European Conference on Computer Systems},
articleno = {24},
numpages = {15},
location = {Heraklion, Greece},
series = {EuroSys '20}
}

@INPROCEEDINGS{caesar,
  author={Arun, Balaji and Peluso, Sebastiano and Palmieri, Roberto and Losa, Giuliano and Ravindran, Binoy},
  booktitle={2017 47th Annual IEEE/IFIP International Conference on Dependable Systems and Networks (DSN)}, 
  title={Speeding up Consensus by Chasing Fast Decisions}, 
  year={2017},
  volume={},
  number={},
  pages={49-60},
  doi={10.1109/DSN.2017.35}}

@article{bpaxos,
  author       = {Michael J. Whittaker and
                  Neil Giridharan and
                  Adriana Szekeres and
                  Joseph M. Hellerstein and
                  Ion Stoica},
  title        = {SoK: {A} Generalized Multi-Leader State Machine Replication Tutorial},
  journal      = {J. Syst. Res.},
  volume       = {1},
  number       = {1},
  year         = {2021},
  url          = {https://doi.org/10.5070/sr31154817},
  doi          = {10.5070/SR31154817},
  bibsource    = {dblp computer science bibliography, https://dblp.org}
}

@misc{accord,
  author       = {Benedict Elliott Smith and Tony Zhang and Blake Eggleston and Scott Andreas},
  title        = {CEP-15: Fast General Purpose Transactions},
  year         = {2021},
  howpublished = {Apache Cassandra Enhancement Proposal (CEP-15) Whitepaper},
  url          = {https://cwiki.apache.org/confluence/download/attachments/188744725/Accord.pdf},
}

@software{cassandra,
  author       = {{Apache Software Foundation}},
  title        = {Apache Cassandra},
  year         = {2026},
  url          = {https://cassandra.apache.org/},
  organization = {Apache Software Foundation},
  note         = {Version 5.0, accessed 2026-06-01}
}

@InProceedings{leaderless-smr,
  author ={Tuanir Fran{\c{c}}a Rezende and Pierre Sutra},
  title ={{Leaderless State-Machine Replication: Specification, Properties, Limits}},
  booktitle ={34th International Symposium on Distributed Computing (DISC 2020)},
  pages ={24:1--24:17},
  series ={Leibniz International Proceedings in Informatics (LIPIcs)},
  ISBN ={978-3-95977-168-9},
  ISSN ={1868-8969},
  year ={2020},
  volume ={179},
  editor ={Hagit Attiya},
  publisher ={Schloss Dagstuhl--Leibniz-Zentrum f{\"u}r Informatik},
  address ={Dagstuhl, Germany},
  URL ={https://drops.dagstuhl.de/opus/volltexte/2020/13102},
  URN ={urn:nbn:de:0030-drops-131024},
  doi ={10.4230/LIPIcs.DISC.2020.24}
}

@techreport{generalized-paxos,
author = {Lamport, Leslie},
title = {Generalized Consensus and Paxos},
year = {2005},
month = {March},
url = {https://www.microsoft.com/en-us/research/publication/generalized-consensus-and-paxos/},
pages = {60},
number = {MSR-TR-2005-33},
}

@misc{byzantine-generalized-paxos,
      title={Generalized Paxos Made Byzantine (and Less Complex)}, 
      author={Miguel Pires and Srivatsan Ravi and Rodrigo Rodrigues},
      year={2017},
      eprint={1708.07575},
      archivePrefix={arXiv},
      primaryClass={cs.DC},
      url={https://arxiv.org/abs/1708.07575}, 
}

@InProceedings{epaxos-star,
  author =	{Ryabinin, Fedor and Gotsman, Alexey and Sutra, Pierre},
  title =	{{Making Democracy Work: Fixing and Simplifying Egalitarian Paxos}},
  booktitle =	{29th International Conference on Principles of Distributed Systems (OPODIS 2025)},
  pages =	{22:1--22:19},
  series =	{Leibniz International Proceedings in Informatics (LIPIcs)},
  ISBN =	{978-3-95977-409-3},
  ISSN =	{1868-8969},
  year =	{2026},
  volume =	{361},
  editor =	{Arusoaie, Andrei and Onica, Emanuel and Spear, Michael and Tucci-Piergiovanni, Sara},
  publisher =	{Schloss Dagstuhl -- Leibniz-Zentrum f{\"u}r Informatik},
  address =	{Dagstuhl, Germany},
  URL =		{https://drops.dagstuhl.de/entities/document/10.4230/LIPIcs.OPODIS.2025.22},
  URN =		{urn:nbn:de:0030-drops-251955},
  doi =		{10.4230/LIPIcs.OPODIS.2025.22}
}

@misc{bipartisan-paxos,
      title={Bipartisan Paxos: A Modular State Machine Replication Protocol}, 
      author={Michael Whittaker and Neil Giridharan and Adriana Szekeres and Joseph M. Hellerstein and Ion Stoica},
      year={2020},
      eprint={2003.00331},
      archivePrefix={arXiv},
      primaryClass={cs.DC},
      url={https://arxiv.org/abs/2003.00331}, 
}

@inproceedings{pbft,
author = {Castro, Miguel and Liskov, Barbara},
title = {Practical Byzantine fault tolerance},
year = {1999},
isbn = {1880446391},
publisher = {USENIX Association},
address = {USA},
booktitle = {Proceedings of the Third Symposium on Operating Systems Design and Implementation},
pages = {173–186},
numpages = {14},
location = {New Orleans, Louisiana, USA},
series = {OSDI '99}
}

@inproceedings{hotstuff,
author = {Yin, Maofan and Malkhi, Dahlia and Reiter, Michael K. and Gueta, Guy Golan and Abraham, Ittai},
title = {HotStuff: BFT Consensus with Linearity and Responsiveness},
year = {2019},
isbn = {9781450362177},
publisher = {Association for Computing Machinery},
address = {New York, NY, USA},
url = {https://doi.org/10.1145/3293611.3331591},
doi = {10.1145/3293611.3331591},
booktitle = {Proceedings of the 2019 ACM Symposium on Principles of Distributed Computing},
pages = {347–356},
numpages = {10},
location = {Toronto ON, Canada},
series = {PODC '19}
}

@ARTICLE{fab,
author={Alvisi, Lorenzo and Martin, Jean-Philippe},
journal={ IEEE Transactions on Dependable and Secure Computing },
title={{ Fast Byzantine Consensus }},
year={2006},
volume={3},
number={03},
ISSN={1941-0018},
pages={202-215},
doi={10.1109/TDSC.2006.35},
url = {https://doi.ieeecomputersociety.org/10.1109/TDSC.2006.35},
publisher={IEEE Computer Society},
address={Los Alamitos, CA, USA},
month=jul}

@article{zyzzyva,
author = {Kotla, Ramakrishna and Alvisi, Lorenzo and Dahlin, Mike and Clement, Allen and Wong, Edmund},
title = {Zyzzyva: Speculative Byzantine fault tolerance},
year = {2010},
issue_date = {December 2009},
publisher = {Association for Computing Machinery},
address = {New York, NY, USA},
volume = {27},
number = {4},
issn = {0734-2071},
url = {https://doi.org/10.1145/1658357.1658358},
doi = {10.1145/1658357.1658358},
journal = {ACM Trans. Comput. Syst.},
month = jan,
articleno = {7},
numpages = {39}
}

@INPROCEEDINGS{sbft,
  author={Golan Gueta, Guy and Abraham, Ittai and Grossman, Shelly and Malkhi, Dahlia and Pinkas, Benny and Reiter, Michael and Seredinschi, Dragos-Adrian and Tamir, Orr and Tomescu, Alin},
  booktitle={2019 49th Annual IEEE/IFIP International Conference on Dependable Systems and Networks (DSN)}, 
  title={SBFT: A Scalable and Decentralized Trust Infrastructure}, 
  year={2019},
  volume={},
  number={},
  pages={568-580},
  doi={10.1109/DSN.2019.00063}}

@InProceedings{kudzu,
  author =	{Shoup, Victor and Sliwinski, Jakub and Vonlanthen, Yann},
  title =	{{Kudzu: Fast and Simple High-Throughput BFT}},
  booktitle =	{39th International Symposium on Distributed Computing (DISC 2025)},
  pages =	{42:1--42:19},
  series =	{Leibniz International Proceedings in Informatics (LIPIcs)},
  ISBN =	{978-3-95977-402-4},
  ISSN =	{1868-8969},
  year =	{2025},
  volume =	{356},
  editor =	{Kowalski, Dariusz R.},
  publisher =	{Schloss Dagstuhl -- Leibniz-Zentrum f{\"u}r Informatik},
  address =	{Dagstuhl, Germany},
  URL =		{https://drops.dagstuhl.de/entities/document/10.4230/LIPIcs.DISC.2025.42},
  URN =		{urn:nbn:de:0030-drops-248597},
  doi =		{10.4230/LIPIcs.DISC.2025.42}
}

@inproceedings{bosco,
author = {Song, Yee Jiun and Renesse, Robbert},
title = {Bosco: One-Step Byzantine Asynchronous Consensus},
year = {2008},
isbn = {9783540877783},
publisher = {Springer-Verlag},
address = {Berlin, Heidelberg},
url = {https://doi.org/10.1007/978-3-540-87779-0_30},
doi = {10.1007/978-3-540-87779-0_30},
booktitle = {Proceedings of the 22nd International Symposium on Distributed Computing},
pages = {438–450},
numpages = {13},
location = {Arcachon, France},
series = {DISC '08}
}

@article{flutter,
  title={Fast Leaderless Byzantine Total Order Broadcast},
  author={Monti, Matteo and Camaioni, Martina and Roman, Pierre-Louis},
  journal={arXiv preprint arXiv:2412.14061},
  year={2024},
  url={https://arxiv.org/abs/2412.14061}
}

@misc{aspen,
      title={Revisiting Speculative Leaderless Protocols for Low-Latency BFT Replication}, 
      author={Daniel Qian and Xiyu Hao and Jinkun Geng and Yuncheng Yao and Aurojit Panda and Jinyang Li and Anirudh Sivaraman},
      year={2026},
      eprint={2601.03390},
      archivePrefix={arXiv},
      primaryClass={cs.DC},
      url={https://arxiv.org/abs/2601.03390}, 
}

@phdthesis{leaderless-thesis,
  TITLE = {{Leaderless state-machine replication : from fail-stop to Byzantine failures}},
  AUTHOR = {Franca Rezende, Tuanir},
  URL = {https://theses.hal.science/tel-03584254},
  NUMBER = {2021IPPAS016},
  SCHOOL = {{Institut Polytechnique de Paris}},
  YEAR = {2021},
  MONTH = Dec,
  TYPE = {Theses},
  HAL_ID = {tel-03584254},
  HAL_VERSION = {v1},
}

@inproceedings{basil,
author = {Suri-Payer, Florian and Burke, Matthew and Wang, Zheng and Zhang, Yunhao and Alvisi, Lorenzo and Crooks, Natacha},
title = {Basil: Breaking up BFT with ACID (transactions)},
year = {2021},
isbn = {9781450387095},
publisher = {Association for Computing Machinery},
address = {New York, NY, USA},
url = {https://doi.org/10.1145/3477132.3483552},
doi = {10.1145/3477132.3483552},
booktitle = {Proceedings of the ACM SIGOPS 28th Symposium on Operating Systems Principles},
pages = {1–17},
numpages = {17},
location = {Virtual Event, Germany},
series = {SOSP '21}
}

@misc{mir-bft,
      title={Mir-BFT: High-Throughput Robust BFT for Decentralized Networks}, 
      author={Chrysoula Stathakopoulou and Tudor David and Matej Pavlovic and Marko Vukolić},
      year={2021},
      eprint={1906.05552},
      archivePrefix={arXiv},
      primaryClass={cs.DC},
      url={https://arxiv.org/abs/1906.05552}, 
}

@inproceedings{iss,
author = {Stathakopoulou, Chrysoula and Pavlovic, Matej and Vukoli\'{c}, Marko},
title = {State machine replication scalability made simple},
year = {2022},
isbn = {9781450391627},
publisher = {Association for Computing Machinery},
address = {New York, NY, USA},
url = {https://doi.org/10.1145/3492321.3519579},
doi = {10.1145/3492321.3519579},
booktitle = {Proceedings of the Seventeenth European Conference on Computer Systems},
pages = {17–33},
numpages = {17},
location = {Rennes, France},
series = {EuroSys '22}
}

@INPROCEEDINGS{bft-mencius,
  author={Milosevic, Zarko and Biely, Martin and Schiper, André},
  booktitle={2013 IEEE 32nd International Symposium on Reliable Distributed Systems}, 
  title={Bounded Delay in Byzantine-Tolerant State Machine Replication}, 
  year={2013},
  volume={},
  number={},
  pages={61-70},
  doi={10.1109/SRDS.2013.15}}

@INPROCEEDINGS{rcc,
  author={Gupta, Suyash and Hellings, Jelle and Sadoghi, Mohammad},
  booktitle={2021 IEEE 37th International Conference on Data Engineering (ICDE)}, 
  title={RCC: Resilient Concurrent Consensus for High-Throughput Secure Transaction Processing}, 
  year={2021},
  volume={},
  number={},
  pages={1392-1403},
  doi={10.1109/ICDE51399.2021.00124}}

@inproceedings{bullshark,
author = {Spiegelman, Alexander and Giridharan, Neil and Sonnino, Alberto and Kokoris-Kogias, Lefteris},
title = {Bullshark: DAG BFT Protocols Made Practical},
year = {2022},
isbn = {9781450394505},
publisher = {Association for Computing Machinery},
address = {New York, NY, USA},
url = {https://doi.org/10.1145/3548606.3559361},
doi = {10.1145/3548606.3559361},
booktitle = {Proceedings of the 2022 ACM SIGSAC Conference on Computer and Communications Security},
pages = {2705–2718},
numpages = {14},
location = {Los Angeles, CA, USA},
series = {CCS '22}
}

@inproceedings{autobahn,
author = {Giridharan, Neil and Suri-Payer, Florian and Abraham, Ittai and Alvisi, Lorenzo and Crooks, Natacha},
title = {Autobahn: Seamless high speed BFT},
year = {2024},
isbn = {9798400712517},
publisher = {Association for Computing Machinery},
address = {New York, NY, USA},
url = {https://doi.org/10.1145/3694715.3695942},
doi = {10.1145/3694715.3695942},
booktitle = {Proceedings of the ACM SIGOPS 30th Symposium on Operating Systems Principles},
pages = {1–23},
numpages = {23},
location = {Austin, TX, USA},
series = {SOSP '24}
}

@inproceedings{narwhal-tusk,
author = {Danezis, George and Kokoris-Kogias, Lefteris and Sonnino, Alberto and Spiegelman, Alexander},
title = {Narwhal and Tusk: a DAG-based mempool and efficient BFT consensus},
year = {2022},
isbn = {9781450391627},
publisher = {Association for Computing Machinery},
address = {New York, NY, USA},
url = {https://doi.org/10.1145/3492321.3519594},
doi = {10.1145/3492321.3519594},
booktitle = {Proceedings of the Seventeenth European Conference on Computer Systems},
pages = {34–50},
numpages = {17},
location = {Rennes, France},
series = {EuroSys '22}
}

@inproceedings{aliph,
author = {Guerraoui, Rachid and Kne\v{z}evi\'{c}, Nikola and Qu\'{e}ma, Vivien and Vukoli\'{c}, Marko},
title = {The next 700 BFT protocols},
year = {2010},
isbn = {9781605585772},
publisher = {Association for Computing Machinery},
address = {New York, NY, USA},
url = {https://doi.org/10.1145/1755913.1755950},
doi = {10.1145/1755913.1755950},
booktitle = {Proceedings of the 5th European Conference on Computer Systems},
pages = {363–376},
numpages = {14},
location = {Paris, France},
series = {EuroSys '10}
}

@misc{poe,
      title={Proof-of-Execution: Reaching Consensus through Fault-Tolerant Speculation}, 
      author={Suyash Gupta and Jelle Hellings and Sajjad Rahnama and Mohammad Sadoghi},
      year={2021},
      eprint={1911.00838},
      archivePrefix={arXiv},
      primaryClass={cs.DB},
      url={https://arxiv.org/abs/1911.00838}, 
}

@article{rachid-leaderless,
title = {Leaderless consensus},
journal = {Journal of Parallel and Distributed Computing},
volume = {176},
pages = {95-113},
year = {2023},
issn = {0743-7315},
doi = {https://doi.org/10.1016/j.jpdc.2023.01.009},
url = {https://www.sciencedirect.com/science/article/pii/S0743731523000151},
author = {Karolos Antoniadis and Julien Benhaim and Antoine Desjardins and Elias Poroma and Vincent Gramoli and Rachid Guerraoui and Gauthier Voron and Igor Zablotchi}
}

@misc{two-step,
      title={Revisiting Lower Bounds for Two-Step Consensus}, 
      author={Fedor Ryabinin and Alexey Gotsman and Pierre Sutra},
      year={2026},
      eprint={2505.03627},
      archivePrefix={arXiv},
      primaryClass={cs.DC},
      url={https://arxiv.org/abs/2505.03627}, 
}

@inproceedings{revisitingBFTConsensus,
author = {Kuznetsov, Petr and Tonkikh, Andrei and Zhang, Yan X},
title = {Revisiting Optimal Resilience of Fast Byzantine Consensus},
year = {2021},
isbn = {9781450385480},
publisher = {Association for Computing Machinery},
address = {New York, NY, USA},
url = {https://doi.org/10.1145/3465084.3467924},
doi = {10.1145/3465084.3467924},
booktitle = {Proceedings of the 2021 ACM Symposium on Principles of Distributed Computing},
pages = {343–353},
numpages = {11},
location = {Virtual Event, Italy},
series = {PODC'21}
}

@misc{ezbft,
      title={ezBFT: Decentralizing Byzantine Fault-Tolerant State Machine Replication}, 
      author={Balaji Arun and Sebastiano Peluso and Binoy Ravindran},
      year={2019},
      eprint={1904.06023},
      archivePrefix={arXiv},
      primaryClass={cs.DC},
      url={https://arxiv.org/abs/1904.06023}, 
}

@misc{ezbft-revisited,
      title={Revisiting EZBFT: A Decentralized Byzantine Fault Tolerant Protocol with Speculation}, 
      author={Nibesh Shrestha and Mohan Kumar},
      year={2019},
      eprint={1909.03990},
      archivePrefix={arXiv},
      primaryClass={cs.DC},
      url={https://arxiv.org/abs/1909.03990}, 
}

@INPROCEEDINGS{ebft,
  author={Eischer, Michael and Distler, Tobias},
  booktitle={2021 IEEE 26th Pacific Rim International Symposium on Dependable Computing (PRDC)}, 
  title={Egalitarian Byzantine Fault Tolerance}, 
  year={2021},
  volume={},
  number={},
  pages={1-10},
  doi={10.1109/PRDC53464.2021.00019}}

@inproceedings{suyash,
author = {Gupta, Suyash and Rahnama, Sajjad and Pandey, Shubham and Crooks, Natacha and Sadoghi, Mohammad},
title = {Dissecting BFT Consensus: In Trusted Components we Trust!},
year = {2023},
isbn = {9781450394871},
publisher = {Association for Computing Machinery},
address = {New York, NY, USA},
url = {https://doi.org/10.1145/3552326.3587455},
doi = {10.1145/3552326.3587455},
booktitle = {Proceedings of the Eighteenth European Conference on Computer Systems},
pages = {521–539},
numpages = {19},
location = {Rome, Italy},
series = {EuroSys '23}
}

@misc{hotstuff1,
      title={HotStuff-1: Linear Consensus with One-Phase Speculation}, 
      author={Dakai Kang and Suyash Gupta and Dahlia Malkhi and Mohammad Sadoghi},
      year={2025},
      eprint={2408.04728},
      archivePrefix={arXiv},
      primaryClass={cs.DB},
      url={https://arxiv.org/abs/2408.04728}, 
}

@inproceedings{honeybadger,
author = {Miller, Andrew and Xia, Yu and Croman, Kyle and Shi, Elaine and Song, Dawn},
title = {The Honey Badger of BFT Protocols},
year = {2016},
isbn = {9781450341394},
publisher = {Association for Computing Machinery},
address = {New York, NY, USA},
url = {https://doi.org/10.1145/2976749.2978399},
doi = {10.1145/2976749.2978399},
booktitle = {Proceedings of the 2016 ACM SIGSAC Conference on Computer and Communications Security},
pages = {31–42},
numpages = {12},
location = {Vienna, Austria},
series = {CCS '16}
}

@inproceedings{vaba,
author = {Abraham, Ittai and Malkhi, Dahlia and Spiegelman, Alexander},
title = {Asymptotically Optimal Validated Asynchronous Byzantine Agreement},
year = {2019},
isbn = {9781450362177},
publisher = {Association for Computing Machinery},
address = {New York, NY, USA},
url = {https://doi.org/10.1145/3293611.3331612},
doi = {10.1145/3293611.3331612},
booktitle = {Proceedings of the 2019 ACM Symposium on Principles of Distributed Computing},
pages = {337–346},
numpages = {10},
location = {Toronto ON, Canada},
series = {PODC '19}
}

@misc{byblos,
      title={Clairvoyant State Machine Replication}, 
      author={Rida Bazzi and Maurice Herlihy},
      year={2019},
      eprint={1905.11607},
      archivePrefix={arXiv},
      primaryClass={cs.DC},
      url={https://arxiv.org/abs/1905.11607}, 
}

@INPROCEEDINGS{dbft,
  author={Crain, Tyler and Gramoli, Vincent and Larrea, Mikel and Raynal, Michel},
  booktitle={2018 IEEE 17th International Symposium on Network Computing and Applications (NCA)}, 
  title={DBFT: Efficient Leaderless Byzantine Consensus and its Application to Blockchains}, 
  year={2018},
  volume={},
  number={},
  pages={1-8},
  doi={10.1109/NCA.2018.8548057}}

@techreport{alpenglow,
  title       = {Solana Alpenglow Consensus: Increased Bandwidth, Reduced Latency},
  author      = {Kniep, Quentin and Sliwinski, Jakub and Wattenhofer, Roger},
  institution = {Anza Research / ETH Zurich},
  year        = {2025},
  url         = {https://drive.google.com/file/d/1Rlr3PdHsBmPahOInP6-Pl0bMzdayltdV},
  type        = {White Paper},
}

@inproceedings{qu,
author = {Abd-El-Malek, Michael and Ganger, Gregory R. and Goodson, Garth R. and Reiter, Michael K. and Wylie, Jay J.},
title = {Fault-scalable Byzantine fault-tolerant services},
year = {2005},
isbn = {1595930795},
publisher = {Association for Computing Machinery},
address = {New York, NY, USA},
url = {https://doi.org/10.1145/1095810.1095817},
doi = {10.1145/1095810.1095817},
booktitle = {Proceedings of the Twentieth ACM Symposium on Operating Systems Principles},
pages = {59–74},
numpages = {16},
location = {Brighton, United Kingdom},
series = {SOSP '05}
}

@inproceedings{lutris,
author = {Blackshear, Sam and Chursin, Andrey and Danezis, George and Kichidis, Anastasios and Kokoris-Kogias, Lefteris and Li, Xun and Logan, Mark and Menon, Ashok and Nowacki, Todd and Sonnino, Alberto and Williams, Brandon and Zhang, Lu},
title = {Sui Lutris: A Blockchain Combining Broadcast and Consensus},
year = {2024},
isbn = {9798400706363},
publisher = {Association for Computing Machinery},
address = {New York, NY, USA},
url = {https://doi.org/10.1145/3658644.3670286},
doi = {10.1145/3658644.3670286},
booktitle = {Proceedings of the 2024 on ACM SIGSAC Conference on Computer and Communications Security},
pages = {2606–2620},
numpages = {15},
location = {Salt Lake City, UT, USA},
series = {CCS '24}
}

@misc{mysticeti,
      title={Mysticeti: Reaching the Limits of Latency with Uncertified DAGs}, 
      author={Kushal Babel and Andrey Chursin and George Danezis and Anastasios Kichidis and Lefteris Kokoris-Kogias and Arun Koshy and Alberto Sonnino and Mingwei Tian},
      year={2025},
      eprint={2310.14821},
      archivePrefix={arXiv},
      primaryClass={cs.DC},
      url={https://arxiv.org/abs/2310.14821}, 
}
\appendix
\section{Additional Related Work}

\subsection{CFT Protocols}

\par \textbf{Generalized consensus.}
Generalized consensus protocols, beginning with Generalized
Paxos~\cite{generalized-paxos} and later variants~\cite{byzantine-generalized-paxos},
also exploit the fact that non-conflicting commands need not be totally
ordered. These protocols are conceptually close to generalized leaderless SMR.
However, when conflicts are detected, they typically require a new ballot,
recovery path, or additional coordination step. This can delay even
conflict-free commands that are entangled with unresolved ordering information.
As a result, these protocols do not directly provide the leaderless execution
latency property considered here, where conflict-free commands must execute
within two total message delays even in the presence of up to $f$ Byzantine
failures.

\subsection{BFT Protocols}

\par \textbf{Single-shot agreement.}
Archipelago~\cite{rachid-leaderless} is leaderless consensus protocol in the sense that the consensus protocol can terminate without relying on one designated replica to drive progress. Our use of the term is different. We study multi-shot SMR, where clients continuously submit commands, replicas
must execute a growing history, and the protocol should avoid totally ordering commands that do not conflict. Bosco~\cite{bosco} studies single-shot Byzantine agreement with a one-message-delay fast decision path. Unfortunately, as discussed in section~\ref{sec:overview}, using a single-shot agreement protocol with a fast path in a black-box manner does not immediately satisfy Invariant 2 (Visibility).

\par \textbf{No fast path.} Asynchronous consensus protocols~\cite{honeybadger, vaba} and DAG protocols~\cite{narwhal-tusk, bullshark} have multiple proposers, but do not have optimal execution latency when there are conflict-free commands. Byblos~\cite{byblos} avoids
ordering non-conflicting commands, but its latency is always at least five message
delays. Similarly, Egalitarian BFT~\cite{ebft} does not have a fast path. DBFT~\cite{dbft} is reduces consensus to binary agreement instances that do not rely on a single leader, but it does not have a fast path that for conflict-free commands.

\par \textbf{Restricted two message delay fast path.}
Protocols such as Sui Lutris~\cite{lutris},
Mysticeti~\cite{mysticeti} have a fast path with two message delays, but this applies only to single-writer operations. In contrast,
the leaderless protocols studied in this paper are not limited to these special cases. They support two message-delay execution for any conflict-free commands.

\par \textbf{Log sharding.}
Multi-leader BFT protocols such as Mir-BFT~\cite{mir-bft},
ISS~\cite{iss}, BFT Mencius~\cite{bft-mencius}, and
RCC~\cite{rcc} improve throughput by assigning different leaders to different
log positions or partitions of the log. This reduces the bottleneck of a single
leader, but the abstraction remains fundamentally log-based. Commands are
assigned to slots, and execution generally proceeds according to the order of
those slots. Thus, even if two commands are conflict-free, a command assigned
to a later slot may have to wait for earlier slots to commit before it can be
executed. These protocols therefore do not satisfy generalized leaderless SMR
as defined in this paper, since conflict-free commands are still delayed by
unrelated earlier log positions.

\end{document}